\long\def\remove#1{}
\newtheorem{theorem}{Theorem}[section] 
\newtheorem{proposition}[theorem]{Proposition}
\newtheorem{definition}[theorem]{Definition}
\newtheorem{problem}[theorem]{Problem}
\newcommand {\mm}[1] {\ifmmode{#1}\else{\mbox{\(#1\)}}\fi}
\newcommand{\ZZ}{\mathbb{Z}}
\newcommand{\G}{\widetilde{G}}
\newcommand{\B}{\mathsf{B}}
\newcommand{\C}{\mathsf{C}}
\newcommand{\Z}{\mathsf{Z}}
\newcommand{\cycle}{\gamma}
\DeclareMathOperator{\im}{im}
\DeclareMathOperator{\rank}{rank}
\DeclareMathOperator{\argmin}{argmin}
\DeclareMathOperator{\loglog}{loglog}
\newcommand{\KK}                {{\cal K}}
\newcommand{\homo}[1]   {{\sf H}_{#1}}
\newcommand{\ann}            {{\mathrm {a}}}
\newcommand{\cord}      {\alpha}
\newcommand{\matR}      {{R}}
\newcommand{\nQ}        {{\widehat{Q}}}
\newcommand{\nU}        {{\widehat{U}}}
\newcommand{\nS}        {{\widehat{S}}}
\begin{document}

\title{Annotating Simplices with a Homology Basis and Its Applications\thanks{%
Research was partially supported by the Slovenian Research Agency, 
program P1-0297 and NSF grant CCF 1064416.}}

\author{
Oleksiy Busaryev\thanks{
Department of Computer Science and Engineering,
The Ohio State University, Columbus, OH 43210, USA.
Email: {\tt busaryev@cse.ohio-state.edu}}
\quad\quad 
Sergio Cabello\thanks{
Department of Mathematics,
University of Ljubljana, Slovenia.
Email: {\tt sergio.cabello@fmf.uni-lj.si}}
\quad\quad 
Chao Chen\thanks{
Institute of Science and Technology Austria, Klosterneuburg, Austria.
Email: {\tt chao.chen@ist.ac.at}}
\quad\quad 
Tamal K. Dey\thanks{
Department of Computer Science and Engineering,
The Ohio State University, Columbus, OH 43210, USA.
Email: {\tt tamaldey@cse.ohio-state.edu}}
\quad\quad
Yusu Wang\thanks{
Department of Computer Science and Engineering,
The Ohio State University, Columbus, OH 43210, USA.
Email: {\tt yusu@cse.ohio-state.edu}}
}

\date{}
\maketitle
\begin{abstract}
\noindent Let $\KK$ be a simplicial complex and
$g$ the rank of its $p$-th homology group
$\homo{p}(\KK)$ defined with $\ZZ_2$ coefficients.
We show that we can compute a basis $H$ of $\homo{p}(\KK)$
and annotate each $p$-simplex of $\KK$ with a binary vector of length $g$
with the following property:
the annotations, summed over all $p$-simplices in any $p$-cycle $z$,
provide the coordinate vector of the homology class
$[z]$ in the basis $H$.
The basis and the annotations for all simplices can be computed
in $O(n^{\omega})$ time, where $n$ is the size of $\KK$ and
$\omega<2.376$ is a quantity so that two $n\times n$ matrices
can be multiplied in $O(n^{\omega})$ time.
The pre-computation of annotations permits answering
queries about the independence or the triviality of
$p$-cycles efficiently. \\
\\
Using annotations of edges in $2$-complexes,
we derive better algorithms for computing optimal basis and optimal
homologous cycles
in $1$-dimensional homology.
Specifically, for computing an optimal basis of $\homo{1}(\KK)$,
we improve the time complexity known for the problem from $O(n^4)$ to
$O(n^{\omega}+n^2g^{\omega-1})$. Here $n$ denotes the size of the
$2$-skeleton of $\KK$ and
$g$ the rank of $\homo{1}(\KK)$.
Computing an optimal cycle homologous to a given $1$-cycle is NP-hard
even for surfaces and an algorithm taking $2^{O(g)}n\log n$
time is known for surfaces.
We extend this algorithm to work with arbitrary $2$-complexes
in $O(n^{\omega})+2^{O(g)}n^2\log n$ time using annotations.
\end{abstract}
\section{Introduction}
Cycles play a fundamental role in summarizing
the topological information about the underlying space that
a simplicial complex represents. 
Homology groups are well known algebraic structures 
that capture topology of a space by identifying
equivalence classes of cycles.
Consequently, questions about homological characterizations of
input cycles often come up in computations dealing with
topology. For example, to compute 
a shortest basis of a homology group with a greedy approach,
one has to test several times whether the cycles in a given set
are {\em independent}. To determine the topological complexity
of a given cycle, a first level test could be deciding if it
is {\em null homologous}, or equivalently if it is
a boundary. Recently, a number of studies have been done that
concern with the computation of such topological properties of
cycles~\cite{ccl-fctpe-10,ccl-fsntc-10,cen-mcshc-09,cf-hrhl-11,dsw-alshb-10,en-mcsns-11,ew-gohhg-05,p-deeoc-10}. 

Two optimization questions about cycles have caught the attention
of researchers because of their relevance in applications:
(i) compute an optimal homology basis, which asks to compute a set of cycles
that form a basis of the corresponding homology group and whose weight
is minimum among all such basis; (ii) compute an optimal homologous cycle,
which asks to compute a cycle with minimum weight in the homology class of a given cycle. 
Chen and Freedman~\cite{cf-hrhl-11} have shown that both problems
for $p$-dimensional cycles, 
$p$-cycles in short, with $p>1$,
are NP-hard to approximate within constant factor.
Thus, it is not surprising that most of the studies have 
focused on $1$-cycles
except for a special case considered in~\cite{DHK10}. 
In this paper, we use {\em simplex annotations}
which lead to better solutions to these problems. 
We only consider homology over the field $\ZZ_2$.

\paragraph{Annotation.}
An {\em annotation} for a $p$-simplex 
is a length $g$ binary vector, where $g$ is the rank  
of the $p$-dimensional homology group. 
These annotations, when summed up for simplices
in a given cycle $z$, provide 
the coordinate vector of the homology class of $z$ in
a pre-determined homology basis. Such coordinates are only of length $g$ and
thus help us determine efficiently the topological characterization
of $z$. We provide an algorithm to compute such annotations
in $O(n^{\omega})$ time,
where $n$ is the number of input simplices and $\omega$ is a quantity
so that two $n\times n$ matrices can be multiplied in
$O(n^{\omega})$ time. It is known that 
$\omega$ is smaller than $2.376$~\cite{cw-mmap-90}.

The high-level idea for computing the annotation can be described as follows. 
We first compute an appropriate basis $Z$ of the cycle space $\Z_p(\KK)$
with the following property: 
each cycle $z\in Z$ has a \emph{sentinel}, 
which is a simplex $\sigma_z$ that appears in the cycle $z$ 
and in no other cycle from $Z$. 
We can then express any cycle $z_0$ efficiently in the basis $Z$ 
as the addition of the cycles $z$ from $Z$ 
whose sentinels $\sigma_z$ are contained in $z_0$.
Next, we compute an arbitrary homology basis $H$ of $\homo{p}(\KK)$.
The annotation of any non-sentinel simplex is simply $0$, 
while the annotation of a sentinel $\sigma_z$ is the coordinates of the 
homology class $[z]$ in the basis $H$. 
Because of linearity, the sum of the annotations
over the sentinels in a cycle gives the homology class of that cycle.
We show how matrix decomposition algorithms can be leveraged to compute these bases and annotations efficiently. 
The recent works of Milosavljevi{\'c} et al.~\cite{mms-11} and
Chen and Kerber~\cite{ck-11} also employ fast matrix operations in Computational Topology.


Annotating the simplices of a simplicial complex with elements of an algebraic
object has a long tradition in Algebraic Topology.
To work with \emph{homotopies} in a simplicial complex one can find a 
non-Abelian group called fundamental group,
described by certain relations, and assign to each 1-simplex an element 
of the group such that deciding if a path is contractible amounts 
to testing whether the product of the corresponding group elements is the identity.
This line of work dates back to Poincar\'e, but testing contractibility
is equivalent to the word problem for groups, and thus undecidable.
When the simplicial complex is a 2-manifold, 
Dehn showed in 1912 that the approach leads to efficient computation which
was further studied in~\cite{ds-99}.
We refer to Stillwell~\cite[Chapters 0, 3, 4 and 6]{stillwell} for a comprehensive treatment 
and historical account of annotations to work with homotopies.
Annotations have been used extensively to work with 1-dimensional homology in surfaces,
where they can actually be computed in linear time; 
see for example~\cite{ccl-fctpe-10,en-mcsns-11,ew-gohhg-05,p-deeoc-10}.
However, we are not aware of previous works using annotations to work with homology
in higher dimensions or in general simplicial complexes. 
For readers familiar with cohomology, it may be worth pointing out
that cocycles $\{\phi_i \}_{i=1\dots g}$ whose classes generate the 
cohomology group provide an
annotation by assigning the binary vector 
$(\phi_1(\sigma),\dots, \phi_g(\sigma))$ to simplex $\sigma$. From this
viewpoint, annotations can be seen as exposing the classical relation
between homology and cohomology groups.

\paragraph{Applications.}
Our annotation technique has the following applications.
\begin{enumerate}
\item Using the annotations for edges, we can compute an optimal
basis for $1$-dimensional homology group $\homo{1}(\KK)$
in $O(n^{\omega}+n^2g^{\omega-1})$ time, where $g$ is the first Betti number of a simplicial complex $\KK$.
This improves the previous $O(n^4)$ best known algorithm for computing an optimal homology basis 
in simplicial complexes \cite{dsw-alshb-10}. 
\item Since it is known that computing an optimal homologous cycle is NP-hard even 
for $1$-cycles~\cite{cf-qhc-07,cf-qhc-08,cen-mcshc-09} in surfaces,
Chambers et al.~\cite{cen-mcshc-09} designed an algorithm taking near-linear
time when $g$ is constant. 
Erickson and Nayyeri~\cite{en-mcsns-11} improved the running time to $2^{O(g)}n\log n$,
and Italiano et al.~\cite{insw-2011} provide an algorithm using $g^{O(g)}n\loglog n$ time.
Using our annotations together with the approach of Erickson and Nayyeri we obtain 
an algorithm for finding an optimal homologous cycle in simplicial complexes in
$O(n^{\omega})+2^{O(g)} n^2\log n$ time.
\item Using annotations for $p$-simplices, we can determine if a given $p$-cycle
is null homologous or if two $p$-cycles are homologous in time $O(tg)$ time
where $t$ is the number of $p$-simplices in the given $p$-cycles.
Given a set of $p$-cycles, we can also answer queries about
their homology independence. A set of $p$-cycles is called 
\emph{homology independent}
if they represent a set of linearly independent homology classes.
A maximal subset of homology independent cycles from a given set
of $k$ cycles with $t$ simplices
can be computed in $O(tg+(g+k)g^{\omega-1})$ time 
after computing the annotations. 
\end{enumerate} 

In many applications, $g$, the dimension of the concerned homology group 
is small and can be taken as a constant. In such cases, the applications
listed above benefit considerably, e.g., applications in 1 and 2 run 
in $O(n^{\omega})$ time. 

\paragraph{Structure of the paper.}
We introduce appropriate background in Section~\ref{sec:background}, and describe the matrix operations we use in Section~\ref{sec:vectorspace}. 
In Section~\ref{sec:annotation-2} we explain how the annotations 
for edges can be computed to work with 1-cycles.
In Section~\ref{sec:optimality} we show results on computing an optimal homology basis and an optimal homologous cycle using edge-annotations.
In Section~\ref{sec:annotation-p} we explain 
how to extend the annotation algorithm for edges to compute annotations for $p$-simplices. 
We next describe some applications of this general annotation result 
in Section ~\ref{sec:null-homology}. 
We conclude in  Section~\ref{sec:conclusions}. 

\section{Background}
\label{sec:background}
\paragraph{Homology.}
In this paper, we focus on simplicial homology over the field $\ZZ_2$;
see comments in the conclusion section for extension to other finite fields.
We briefly introduce the notations for chains, cycles, boundaries,
and homology groups of a simplicial complex, adapted to $\ZZ_2$.
The details and general case appear in any standard
book on algebraic topology such as~\cite{Munkres}.

Let $\KK$ be a simplicial complex. Henceforth, we assume
that $\KK$ is connected and 
use $\KK_p$ to denote the set of simplices in $\KK$ of dimension at most $p$.
To work with the 1-skeleton we use $V=\KK_0$, $E=\KK_1$,  
and borrow standard notation from graph theory.

A \emph{$p$-chain}
in $\KK$ is a formal sum of $p$-simplices, 
$c=\sum_{\sigma\in \KK_p}\alpha_\sigma\sigma$, $\alpha_\sigma\in \ZZ_2$.
The set of $p$-chains forms a vector space 
$\C_p(\KK)$ under $\ZZ_2$-addition where the empty chain
plays the role of identity $0$. 
%
The chain group $\C_p$ is in one-to-one 
correspondence to the family of subsets of $\KK_p$. Hence
$\C_p$ is isomorphic to the $n_p$-dimensional binary vector space 
$(\ZZ_2)^{n_p}$, where $n_p$ 
is the number of $p$-simplices in $\KK$. 
A natural basis of $\C_p$ consists of the $p$-simplices in $\KK$.
In this basis, the coordinate vector of a $p$-chain is the incidence vector
telling which $p$-simplices appear in the corresponding subset.

The boundary of a $p$-simplex is a $(p-1)$-chain consisting of the set of 
its $(p-1)$-faces. This can be linearly extended to a \emph{boundary map}
$\partial_p\colon\C_{p}\rightarrow\C_{p-1}$, where the boundary of a chain
is defined as the sum of the boundaries of its elements. 
Using the natural bases of $\C_{p}$ and $\C_{p-1}$, 
computing the boundary of a $p$-chain corresponds to 
multiplying the chain vector
with a boundary matrix $[b_1\,\,b_2\cdots b_{n_p}]$
whose column vectors are boundaries
of $p$-simplices. We slightly abuse the notation and denote 
the {\em boundary matrix} also with $\partial_p$.

We define the \emph{group of $p$-cycles}
as the kernel of $\partial_p$, $\Z_p := \ker \partial_p$, and define 
the \emph{group of $p$-boundaries} as the image of $\partial_{p+1}$, 
$\B_p := \im \partial_{p+1}$. The latter is a subgroup of the former.
The \emph{$p$-th homology group} $\homo{p}$ is the quotient $\Z_p/ \B_p$.
Each element in $\homo{p}$, called a \emph{homology class}, 
is an equivalence class of $p$-cycles whose
differences are $p$-boundaries. For any $p$-cycle $z$, we use $[z]$ to denote 
the corresponding homology class. 
Two cycles are \emph{homologous} when they belong 
to the same homology class.
Note that $\Z_p$, $\B_p$, and $\homo{p}$ are also vector spaces.
We call their bases a \emph{cycle basis},
a \emph{boundary basis}, and a \emph{homology basis} respectively.
The dimension of the $p$-th homology group 
is called the \emph{$p$-th Betti number}. We will denote it by $g$.
A set of $p$-cycles $\{z_1,\cdots,z_{g}\}$ 
is a \emph{homology cycle basis} if the set of classes 
$\{[z_1],\cdots,[z_{g}]\}$ forms a homology basis.

\paragraph{Optimization problems.} 
Given a simplicial complex $\KK$, 
exponentially many cycles may belong to 
a homology class $[z]$.
We consider an optimization problem over such a set with
all $p$-cycles assigned well-defined weights. Given a non-negative 
real weight $w(\sigma)$ for each $p$-simplex $\sigma$, 
we define the weight of a cycle as the sum of the weights of its simplices,
$w(z)=\sum_{\sigma\in z} w(\sigma)$. 
For example, when $p=1$ and the weights are the lengths of the edges,
the weight of a cycle is its length and
the optimization problem seeks for the shortest 
cycle in a given class. Formally, we state:
\begin{problem}
Given a simplicial complex and a cycle $z$, find the cycle
$\argmin_{z_0\in [z]}w(z_0).$
\end{problem}
Next, we consider an optimization problem over the set of all 
homology cycle bases.
The weight of a homology cycle basis $H$ is defined as the sum of the 
weights of its elements,
$w(H)=\sum_{z\in H} w(z)$. Note that a simplex may contribute to the weight 
multiple times if it belongs to multiple cycles in the basis $H$. 
Formally, we have the following problem.
\begin{problem}
\label{prob:OBAS}
 Given a simplicial complex, find a homology cycle basis $H$ with minimal $w(H)$.
\end{problem}

\section{Efficient Matrix Operations}
\label{sec:vectorspace}

Under $\ZZ_2$ coefficients, the groups $\C_p$, $\Z_p$, $\B_p$, and $\homo{p}$ are all vector spaces.
Linear maps among such spaces or change of bases within the same
space can be represented by matrices and operations on them.
Our algorithm computes simplex annotations via manipulations of 
such matrices and bases. 
Several of our computations use the following concept.


\begin{definition}[Earliest Basis]
Given a matrix $A$ with rank $r$, the set of columns 
$B_{opt}=\{a_{i_1},\cdots,a_{i_r}\}$ is called the
\emph{earliest basis} if the
column indices $\{i_1,\cdots,i_r\}$ 
are the lexicographically smallest index 
set such that the corresponding columns of $A$ have full rank.
\end{definition}
For convenience, 
we often use the same symbol to denote both a set of vectors 
and the matrix they form and 
denote by $B_{opt}$ also the matrix $[a_{i_1}\,\,a_{i_2}\cdots a_{i_r}]$.
It is convenient to consider the following alternative view of the earliest basis:
a column vector of $A$ is in the earliest basis if and only if it does not
belong to the subspace generated by column vectors to its left. 

We next summarize the operations on matrices that we need. 
For simplicity, we assume that the matrix multiplication exponent $\omega>2$; 
otherwise, some additional logarithmic terms appear in the running times.
\begin{proposition}\label{lem:linear}
	Let $A$ be an $m\times n$ matrix of rank $r$ 
	with entries over $\ZZ_2$ where $m\le n$. 
	\begin{itemize}
		\item[\textup(a\textup)] If $A$ is square and has full rank, one can compute its inverse $A^{-1}$ in $O(n^\omega)$ time. 
		\item[\textup(b\textup)] There is an $O(n^\omega)$ time algorithm to compute the earliest basis $B_{opt}$ of $A$. %
		\item[\textup(c\textup)] In $O(n^\omega)$ time, one can compute the coordinates of all columns of $A$ in the earliest basis $B_{opt}$. Formally, one can compute $AP = B_{opt} [I_r \mid \matR]$, where $P$ is a permutation matrix, 
		$I_r$ is an $r\times r$ identity matrix, and $\matR$ is an $r\times (n-r)$ matrix.
	\end{itemize}
\end{proposition}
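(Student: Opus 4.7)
The plan is to reduce each of the three parts to known fast-matrix-multiplication-based factorizations over a field, all of which work over $\ZZ_2$. The workhorse is a PLU-type decomposition of an $m \times n$ matrix (with $m \le n$), computable in $O(n^\omega)$ time: the Bunch--Hopcroft algorithm for the square invertible case, and its rectangular LSP/LUP extension (Ibarra--Moran--Hui) for the general case. All subsequent work is either a constant number of such factorizations or a constant number of matrix products of comparable size.

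For part (a), I would first compute $PA = LU$ in $O(n^\omega)$ time and then form $A^{-1} = U^{-1} L^{-1} P$. Inversion of an $n \times n$ triangular matrix reduces to a constant number of half-size matrix multiplications through the standard block formula for the inverse of a $2\times 2$ block-triangular matrix, giving a recurrence $T(n) = 2T(n/2) + O(n^\omega) = O(n^\omega)$.

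For part (b), I would use a column-major recursive scheme that processes columns strictly left-to-right, so as to respect the lexicographic ordering required by the definition of earliest basis. Split $A = [A_L \mid A_R]$ into its left and right halves. Recursively compute the earliest basis $B_L$ of $A_L$. With a single half-size matrix multiplication, project out the $B_L$-component of every column of $A_R$, producing a residual matrix $A_R'$. Recursively compute the earliest basis of $A_R'$; the original columns of $A_R$ corresponding to its pivots, appended after $B_L$, form $B_{opt}$. The recurrence is again $O(n^\omega)$. For part (c), the coordinates of each non-basis column of $A$ in $B_{opt}$ fall out as a by-product of this recursion: combining each $A_R$-column's $B_L$-projection with the expression of its residual in the $A_R'$-pivots yields its coordinate vector, which is recorded in the corresponding column of $\matR$; a final permutation $P$ bringing all pivot columns to the front produces $AP = B_{opt}[I_r \mid \matR]$ within the same $O(n^\omega)$ bound.

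The main obstacle is verifying that the recursion in part (b) truly returns the \emph{earliest} basis, and not merely some basis of the column space. The key invariant, proved by induction on recursion depth, is that a column of the residual matrix $A_R'$ lies in the span of its predecessors in $A_R'$ if and only if the corresponding column of $A_R$ lies in the span of \emph{all} columns of $A$ to its left in $A$. Combined with the inductive guarantee that the chosen basis of $A_L$ is already earliest and that the basis of $A_R'$ is earliest relative to its own column ordering, this forces each selected column of $A$ to be the leftmost column not yet in the span of its predecessors, which is exactly the defining property of $B_{opt}$.
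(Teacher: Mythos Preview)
Your proposal is correct. For part (a) you proceed exactly as the paper does, citing Bunch--Hopcroft and reducing to triangular inversion.

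For parts (b) and (c) your route differs from the paper's. The paper invokes the LSP decomposition $A=QSU$ of Ibarra--Moran--Hui as a black box: since $U$ is upper unitriangular, the nonzero columns of $S$ occur precisely at the indices of the column rank profile, which coincides with the earliest-basis indices; this gives (b) immediately. For (c) the paper permutes the earliest-basis columns to the front and runs a \emph{second} LSP decomposition on $AP$, then reads off $R=\widehat U_{11}^{-1}\widehat U_{12}$ from the block structure of the triangular factor. You instead develop a self-contained left-to-right column recursion that simultaneously identifies pivot columns and accumulates their coordinates, so (b) and (c) fall out of a single pass. Your algorithm is, in effect, a direct derivation of the column-rank-profile computation that underlies LSP itself; the paper simply cites that machinery rather than re-deriving it. Your version is more self-contained and makes the ``leftmost-first'' property explicit via the invariant you state, while the paper's version is shorter because it delegates both the algorithm and the $O(n^\omega)$ analysis to the LSP literature.

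One small point to tighten: your recurrence sketch ``$T(n)=2T(n/2)+O(n^\omega)$'' is fine for (a) but does not literally model the rectangular subproblems in (b), where the left subcall is on an $m\times (n/2)$ matrix that may violate $m\le n/2$, and the right subcall is on an $(m-r_L)\times (n/2)$ residual. The $O(n^\omega)$ bound still holds---it is exactly the analysis carried out for LSP in the references you cite---but you should either appeal to that analysis or state the two-parameter recurrence and bound it directly.
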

\begin{proof}
	Item (a) appears in Bunch and Hopcroft~\cite{bc-tfifm-74}; alternatively, see \cite{introduction_algorithms}.
	Item (b) and (c) follows from the \emph{LSP-decomposition}~\cite{ibarra1982generalization,jeannerod2006lsp},
	which can be computed in $O(n^\omega)$ time. 
	We restate the following result for a column version,
where we transpose both sides of the standard LSP-decomposition 
and rename the matrices.

\smallskip
\noindent{\bf LSP-decomposition} \cite{jeannerod2006lsp}. {\it Given an $m\times n$ matrix $A$, one can compute in $O(n^\omega)$ time a decomposition
$A=QSU$,
where $Q$ is an $m\times m$ permutation matrix, 
$U$ is an $n\times n$ upper unitriangular matrix, 
$S$ is an $m\times n$ matrix with $r = \rank(A)$ non-zero columns 
which are linearly independent. Furthermore, $S$ is lower unitriangular 
if we remove all zero columns and the lowest $m-r$ rows.} 

\smallskip
Notice that the permutation matrix $Q$ only permutes rows and thus does not
affect the computation of the earliest basis of the column rank.
By definition, the indices of non-zero columns of $S$ (called the \emph{column rank profile})
are the indices of the earliest basis, and thus can be computed by 
applying LSP-decomposition once. Therefore item (b) follows.

Next, we prove item (c).
Due to item (b), we know the indices 
of the earliest basis $B_{opt}$ in $A$
and we can compute 
a column permutation matrix $P$ so that $AP$ 
has the first $r$ columns as this earliest basis. 
Next, we compute the LSP-decomposition of the matrix $AP = \nQ \nS \nU$. 
Since indices of the non-zero columns in $\nS$ correspond to those of the earliest basis of $AP$, the last $n-r$ columns of $\nS$ are necessarily zero. 
Hence we can rewrite the LSP decomposition as 
$$
AP ~~ = ~~\nQ\nS\nU ~~ = ~~ \nQ[\nS_1 \mid 0]\left[\begin{array}{cc}\nU_{11} & \nU_{12}\\0 & \nU_{22}\end{array}\right]
$$
where $\nS_1$ has size $m\times r$, $\nU_{11}$ has size 
$r\times r$ and is upper unitriangular, 
$\nU_{12}$ has size $r\times (n-r)$, and $\nU_{22}$ has 
size $(n-r)\times (n-r)$.
Evaluating the right side, we have 
$$
AP ~~ = ~~ \nQ\nS_1\nU_{11} [I_r \mid \nU_{11}^{-1}\nU_{12}].
$$ 
Since the first $r$ columns of the matrix on right hand side equal to
$\nQ\nS_1\nU_{11}$, and $B_{opt}$ consists of the first $r$ columns of 
$AP$ by definition of $P$, 
we see that $\nQ\nS_1\nU_{11}=B_{opt}$. Setting $R:=\nU_{11}^{-1}\nU_{12}$
we obtain the desired decomposition $AP = B_{opt}[I_r\mid \matR]$. 
The algorithm involves one computation of an earliest basis, 
one LSP-decomposition, and constant number of 
matrix inversions and multiplications. Since each of these operations takes 
$O(n^\omega)$ time, item (c) follows.
\end{proof}

\section{Annotating Edges}
\label{sec:annotation-2}

Let $\KK$ be a given simplicial complex.
First, we define annotations in general terms using $g$ for the dimension of $\homo{p}(\KK)$.

\begin{definition}[Annotations]
 An \emph{annotation} for $p$-simplices is a function $\ann\colon \KK_p \rightarrow (\ZZ_2)^g$ with the following property:
 any two $p$-cycles $z$ and $z'$ belong to the same homology class if and only 
 $$
	\sum_{\sigma\in z} \ann(\sigma) ~=~ \sum_{\sigma\in z'} \ann(\sigma).
 $$
 Given an annotation $\ann$, the \emph{annotation} of any $p$-cycle $z$ is defined by $\ann(z)=\sum_{\sigma\in z} \ann(\sigma)$.
\end{definition}

We will construct annotations using coordinate
vectors of cycles in a homology basis.
Let $H=(h_1,h_2,\dots, h_g)$ be a basis of the vector space $\homo{p}(\KK)$.
For a $p$-cycle $z$, if $[z]=\sum_{i=1}^g \lambda_i h_i$
where each $\lambda_i\in \ZZ_2$, then the coordinate vector of $[z]$ in $H$ is
$(\lambda_1,\dots,\lambda_g)\in (\ZZ_2)^g$. The question is how to
annotate the $p$-simplices so that the sum of annotations in the simplices of $z$ 
gives $(\lambda_1,\dots,\lambda_g)$.

In this section, we explain the technique for annotating edges. An extension
to $p$-simplices is explained in Section \ref{sec:annotation-p}.
We compute edge annotations in three steps. 
First, we construct a cycle basis $Z$ in which 
any cycle can be expressed in simple and efficient terms.
Second, we find a homology cycle basis $H$. Last, we compute the homology of 
each cycle in $Z$ in the homology cycle basis $H$.
From this information, one can compute the homology class of any other 
cycle using vector sums in the coordinate
system provided by $H$. The approach is based on using a spanning tree of the
1-skeleton to generate the space of cycles. The approach of using a spanning
tree to generate the fundamental group and then obtain the homology group,
is well known in topology and has been used extensively;
see for example~\cite{e-dgteg-03,ew-gohhg-05,stillwell}

\paragraph{Step 1: Computing a cycle basis $\boldsymbol{Z}$.}
Let us fix throughout this section a spanning tree $T$ 
in the $1$-skeleton of $\KK$; it contains $n_0-1$ edges.
Let $k=n_1-n_0+1$ be the number of edges in $E\setminus E(T)$.
We fix an enumeration $e_1,\dots, e_{n_1}$ of the edges 
of $E$ with the property
that the edges $e_1,\dots ,e_k$ are precisely the edges of $E\setminus E(T)$.
Thus, $e_{k+1},\dots, e_{n_1}$ are the edges of $T$.
The edges of $E\setminus E(T)$ are called \emph{sentinel edges}, while
the edges of ${E(T)}$ are \emph{non-sentinel edges}.

For any sentinel edge $e\in E\setminus E(T)$, 
denote by $\cycle(T,e)$ the 
cycle corresponding to the unique simple path that connects the endpoints 
of $e$ in $T$ plus the edge $e$. We call it a \emph{sentinel cycle}.
Let $Z$ be the set of such sentinel cycles 
$\{\cycle(T,e_1),\cycle(T,e_2), \dots, \cycle(T, e_k)\}$.
We have the following property: a sentinel edge $e_i$ belongs
to a sentinel cycle $\gamma(T,e_j)$ if and only if $i=j$.
For completeness, we set $\cycle(T,e)=0$ when $e$ belongs to $T$.
The following result is probably folklore.

\begin{proposition}[Cycle basis]\label{lem:cyclebasis}
	$Z$ is a cycle basis and for any cycle $z\in \Z_1$ we have 
		$z = \sum_{e\in z} \cycle(T,e)$.
\end{proposition}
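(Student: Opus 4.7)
The plan is to prove two independent facts: (i) $Z$ is linearly independent, and (ii) every cycle $z \in \Z_1$ satisfies $z = \sum_{e \in z} \cycle(T,e)$. Since $\cycle(T,e) = 0$ for every tree edge $e$, the right-hand side of (ii) is automatically a $\ZZ_2$-combination of elements of $Z$, so (ii) gives that $Z$ spans $\Z_1$ at the same time as it proves the formula. Combined with (i) this yields the basis claim.

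Linear independence will follow immediately from the sentinel property highlighted just before the statement: the sentinel edge $e_i$ appears in $\cycle(T,e_i)$ and in no other $\cycle(T,e_j)$ with $j \neq i$. Hence, in any nontrivial sum $\sum_{i \in S} \cycle(T, e_i)$ with $S \neq \emptyset$, the $\ZZ_2$-coefficient of $e_i$ for any $i \in S$ is $1$, so the sum is nonzero.

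The bulk of the work is the identity $z = z'$, where $z' := \sum_{e \in z} \cycle(T,e)$. I would verify it edge by edge: for each $f \in E$, I will show that the coefficient of $f$ in $z'$ equals $[f \in z]$. If $f \in E \setminus E(T)$ is a sentinel, the sentinel property already gives that $f$ occurs in $z'$ with coefficient $[f \in z]$, matching $z$. If $f \in E(T)$ is a tree edge, removing $f$ splits the vertices of $T$ into two pieces $V_1, V_2$; the coefficient of $f$ in $\cycle(T,e)$ for a sentinel $e$ is $1$ exactly when the endpoints of $e$ lie on opposite sides of the cut $(V_1,V_2)$ (since that is when $f$ lies on the $T$-path between them), while tree edges $e$ contribute nothing. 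Hence the coefficient of $f$ in $z'$ equals the parity of the number of sentinel edges of $z$ that cross this cut.

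The main step, though mild, is to equate this parity to $[f \in z]$. The key observation is that $z \in \Z_1$ means $\partial_1 z = 0$, which, after summing coordinates over $V_1$, forces the total number of edges of $z$ crossing $(V_1,V_2)$ to be even. Among the tree edges of $z$, only $f$ itself can cross this cut, so the number of sentinel edges of $z$ crossing has the same parity as $[f \in z]$. This matches the coefficient of $f$ in $z$ and closes the tree-edge case, completing $z = z'$ and hence the proposition.
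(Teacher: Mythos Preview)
Your proof is correct. The linear-independence argument is essentially the same as the paper's. For the identity $z=\sum_{e\in z}\cycle(T,e)$, however, you take a genuinely different route.

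The paper fixes a root $s$ of $T$ and writes, for each edge $e=uv$ of $z$, the telescoping identity $e = (T[s,u]+e+T[s,v]) + (T[s,u]+T[s,v])$; summing over $e\in z$, the root-paths $T[s,w]$ cancel in pairs because each vertex $w$ of $z$ has even degree in $z$, and one is left with $\sum_{e\in z}(T[s,u]+e+T[s,v]) = \sum_{e\in z}\cycle(T,e)$. This is a two-line telescoping trick once the right auxiliary object (root-paths) is introduced.

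Your argument instead verifies the identity coefficient by coefficient: sentinel edges are handled by the sentinel property, and for a tree edge $f$ you use the fundamental cut $(V_1,V_2)$ of $f$ in $T$ together with the fact that any $1$-cycle crosses a vertex cut an even number of times. This is a perfectly valid and self-contained combinatorial argument; it is a bit longer but arguably more transparent about \emph{why} each tree edge ends up with the right coefficient, and it avoids introducing the root $s$. A minor stylistic difference: the paper infers that $Z$ is a basis from independence plus $|Z|=\dim(\Z_1)$, whereas you infer it from independence plus the spanning consequence of the formula; both are fine.
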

	\begin{proof}
		Since the edge $e_1\in E\setminus E(T)$ does not appear in 
	$\cycle(T,e_2),\dots, \cycle(T,e_k)$, the cycle $\cycle(T,e_1)$ is 
	linearly independent of $\cycle(T,e_2),\dots, \cycle(T,e_k)$. 
	The same argument applies to any cycle $\cycle(T,e_i)\in Z$,
		and thus the cycles of $Z$ are linearly independent. 
	Since $k = \dim(\Z_1)$, $Z$ is indeed a basis for $\Z_1$. 

		To show the second  half of the claim, fix an arbitrary cycle $z$. 
	Fix a vertex $s$ of the tree $T$ and for any vertex $u$ of $T$, 
	let $T[s,u]$ denote the unique simple path in $T$ connecting $u$ to $s$. 
	We then have:
	$$
	z ~=~ \sum_{e = uv \in z} e ~=~ \sum_{e = uv \in z} (T[s,u] + e + T[s,v]) ~=~ 
			\sum_{e\in z} \cycle(T,e).
	$$
	The second equality holds as every vertex in the cycle $z$ is the endpoint of an even number of edges of $z$. 
	\end{proof}

\paragraph{Step 2: Computing a homology cycle basis $\boldsymbol{H}$.}
In this step, we compute a homology cycle basis $H$ from $Z$
with the help of Proposition \ref{lem:linear}(b). 
Specifically, we construct a new matrix $[\partial_2\mid Z]$ with
the submatrix $Z$ being formed by the chain vectors of cycles in $Z$.
We compute the 
earliest basis $\widetilde{Z} = [B \mid H]$ of $[\partial_2\mid Z]$ 
where $B$ contains the first 
$r = \rank(\partial_2)$ columns of $\widetilde{Z}$.
Since the set of columns of $\partial_2$ generates the boundary group, 
by the definition of earliest basis, it is necessary that the columns 
in $B$ come from $\partial_2$ and form a boundary cycle basis. 
Since $Z$ and hence $\partial_2 \cup Z$
generates the cycle group,
the remaining columns of $\widetilde{Z}$, namely $H$, 
form a homology cycle basis.

\paragraph{Step 3: Computing annotations.}
Finally, for elements of $Z$ we compute their coordinates in 
the cycle basis $\widetilde{Z}$. For each 
sentinel cycle $z=\cycle(T,e)$, we compute its 
coordinate vector in $\widetilde{Z}$ by solving the linear system
$\widetilde{Z} x=z$. The last $g$ entries
of $x$ give its coordinates in the basis $H$.
We use this length $g$ vector as the annotation of the
sentinel edge $e$. 
We can compute annotations for all sentinel edges together by 
solving $\widetilde{Z}X=Z$ and taking the last $g$ rows of the solution $X$.
For a non-sentinel edge, we simply set its annotation to be the zero vector.
\parpic[r]{\includegraphics[scale=.55]{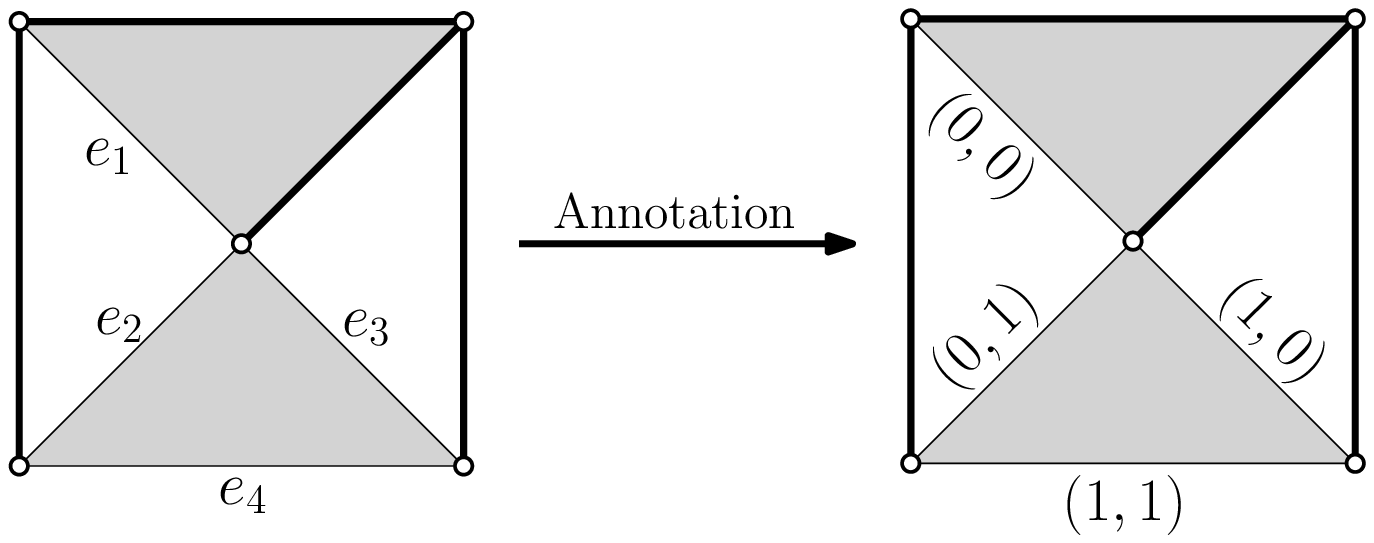}}
An example of annotation for a $2$-complex is shown on right. 
The edges of a spanning tree are shown with thicker edges. The edges $e_1,e_2,e_3,e_4$ are sentinel edges. The cycles given by sentinel edges $e_2$ and $e_3$ form the homology cycle basis $H$ computed by the algorithm. We show the annotations for the sentinel edges; all other edges get annotation $(0,0)$.
The annotation of $(1,1)$ for $e_4$ makes it possible to evaluate
the cycle $e_2e_3e_4$ to $(0,0)$ as it is null-homologous and also evaluate the
outer boundary to $(1,1)$ as it is homologous to the sum of the two holes.

\begin{theorem}
The algorithm described above computes an annotation of length
$\dim(\homo{1}(\KK))$ for the edges of a $2$-complex $\KK$ in $O(n^\omega)$ time,
where $n$ is the size of $\KK$.
\label{thm:edgeannotations}
\end{theorem}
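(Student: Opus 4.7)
The plan is to verify three things: (i) the function $\ann$ constructed above satisfies the annotation property from Definition~4.1; (ii) the output vectors have length exactly $g=\dim(\homo{1}(\KK))$; and (iii) the total running time is $O(n^{\omega})$.

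For correctness, the key tool is Proposition~\ref{lem:cyclebasis}, which gives the decomposition $z=\sum_{e\in z}\cycle(T,e)$ for any $1$-cycle $z$. Since $\cycle(T,e)=0$ for non-sentinel edges, passing to homology classes yields $[z]=\sum_{e\in z\cap (E\setminus E(T))} [\cycle(T,e)]$. By construction, the annotation of each sentinel edge $e$ is defined to be the coordinate vector of $[\cycle(T,e)]$ in the homology basis $H$, and non-sentinel edges are assigned the zero vector. Hence by linearity, $\ann(z)=\sum_{e\in z}\ann(e)$ is exactly the coordinate vector of $[z]$ in $H$. Two cycles then share annotations if and only if they are homologous, which is the required property. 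For the length, observe that $\widetilde{Z}$ is the earliest basis of $[\partial_2\mid Z]$, and the columns of $\partial_2\cup Z$ span $\B_1+\Z_1=\Z_1$, so $|\widetilde{Z}|=\dim(\Z_1)$; meanwhile $B$ consists of the initial columns of the earliest basis drawn only from $\partial_2$, hence $B$ is also an earliest basis of $\partial_2$ and $|B|=\dim(\B_1)$. Therefore $|H|=\dim(\Z_1)-\dim(\B_1)=g$.

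For the running time, first compute a spanning tree $T$ of the $1$-skeleton in $O(n)$ time. The sentinel cycles $\cycle(T,e_i)$ can each contain up to $O(n)$ edges and there are $k=O(n)$ of them, so the matrix $Z$ of chain vectors for the sentinel cycles is built explicitly in $O(n^2)$ time by tracing tree paths between endpoints. The boundary matrix $\partial_2$ has size $O(n)\times O(n)$, so the combined matrix $[\partial_2\mid Z]$ has $O(n)$ rows and columns, and Proposition~\ref{lem:linear}(b) produces its earliest basis $\widetilde{Z}=[B\mid H]$ in $O(n^{\omega})$ time.

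The final step, computing annotations for each sentinel edge, amounts to expressing each column of $Z$ in the basis $\widetilde{Z}$ and reading off the last $g$ entries. The plan is to invoke Proposition~\ref{lem:linear}(c) on the matrix $[\widetilde{Z}\mid Z]$: since $\widetilde{Z}$ is already a basis of the column space and appears first, the earliest basis of this combined matrix coincides with $\widetilde{Z}$, and the decomposition $[\widetilde{Z}\mid Z]P=\widetilde{Z}[I\mid\matR]$ gives us the coordinate vectors of all columns of $Z$ in $\widetilde{Z}$ directly from $\matR$ (together with the trivial unit coordinates for the columns of $Z$ which themselves belong to $\widetilde{Z}$). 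This step also runs in $O(n^{\omega})$ time. Summing all costs gives the claimed $O(n^{\omega})$ bound, with the explicit construction of $Z$ fitting comfortably inside this since $\omega>2$. The main delicate point is ensuring that the earliest basis of $[\partial_2\mid Z]$ cleanly splits into a boundary basis followed by a homology basis; this is handled by the ordering (columns of $\partial_2$ first) combined with the definition of the earliest basis.
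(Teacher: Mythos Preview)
Your proof is correct and follows essentially the same approach as the paper: correctness via Proposition~\ref{lem:cyclebasis} and linearity, and the time bound via Proposition~\ref{lem:linear}. The only minor differences are that you justify Step~3 using Proposition~\ref{lem:linear}(c) on $[\widetilde{Z}\mid Z]$ whereas the paper invokes part~(a) to solve $\widetilde{Z}X=Z$ directly, and you include an explicit check that $|H|=g$; both are harmless variants.
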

\begin{proof}
From Step 3, the annotation of a sentinel edge $e$ is exactly the coordinate vector of the homology class $[\cycle(T,e)]$.
It then follows from Proposition \ref{lem:cyclebasis} that, for any cycle $z$, 
the coordinate vector of the homology class $[z]$ is simply the summation of annotations of all edges in $z$. 
For the time complexity, notice that Step 1 requires computing a spanning tree and the cycle basis $Z$, which takes $O(n^2)$ time.
Steps 2 and 3 take $O(n^\omega)$ time because of Proposition \ref{lem:linear}(b) and (a) respectively.
\end{proof}

\section{Optimality for \texorpdfstring{$\boldsymbol{1}$}{1}-cycles}
\label{sec:optimality}
\subsection{Shortest homology basis}
\label{sec:shortest}

In this section we discuss the problem of computing an optimal homology basis for 
one dimensional homology $\homo{1}$. 
The optimal homology cycle basis here is the \emph{shortest homology basis}
since we minimize the weights / lengths.
We present an efficient algorithm that combines the 
approach of Erickson and Whittlesey~\cite{ew-gohhg-05}
and our annotation technique.
The approach restricts the search to a well-structured family of cycles,
represents each cycle in this family with a length-$g$ binary vector, 
and then reduces the computation to the problem of finding an 
earliest basis in a matrix of size $g\times n^2$.

For each vertex $s\in V$, let $T_s$ be the {\em shortest path tree} from $s$ with respect to
the weight function.
Denote by $Z_s$ the set of sentinel cycles corresponding to this tree $T_s$ and $\Pi$ the union
of $Z_s$ for all $s\in V$, that is, 
$$\Pi ~=~ \bigcup_{s\in V}Z_s 
	  ~=~ \bigcup_{s\in V} \{ \cycle(T_s,e)\mid e\in E\setminus E(T_s)\}.
$$
The following property was noted by 
Erickson and Whittlesey~\cite{ew-gohhg-05}.
See also Dey, Sun, and Wang~\cite{dsw-alshb-10} for an extension.
\begin{proposition} \label{lem:characterization}
	If we sort the cycles of $\Pi$ in non-decreasing order of their 
weights, the earliest basis of $\Pi$ is a shortest homology basis. 
\end{proposition}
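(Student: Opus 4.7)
The plan is to recast this as a minimum-weight basis problem in a linear matroid and reduce it to a structural lemma about shortest cycles. First, observe that linear independence in $\homo{1}(\KK)$ endows the ground set $\Pi$ with a linear matroid: a subset $S \subseteq \Pi$ is independent exactly when the homology classes $\{[\pi] : \pi \in S\}$ are linearly independent over $\ZZ_2$. Sorting $\Pi$ by non-decreasing weight and extracting the earliest basis is precisely Kruskal's greedy algorithm on this matroid, so the earliest basis of $\Pi$ is the minimum-weight basis of the matroid. Since $\Pi$ contains $Z_s$ for every $s \in V$, and $Z_s$ is a cycle basis by Proposition~\ref{lem:cyclebasis} whose image spans $\homo{1}(\KK)$, the matroid on $\Pi$ has rank $g$, and its bases are precisely the homology cycle bases contained in $\Pi$.

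Next, the proposition reduces to the claim that $\Pi$ contains some shortest homology basis. For this I would invoke the following structural lemma, established for surfaces in~\cite{ew-gohhg-05} and extended to general simplicial complexes in~\cite{dsw-alshb-10}: every shortest cycle in a non-trivial class $h \in \homo{1}(\KK)$ is of the form $\cycle(T_v, e)$ for some vertex $v$ and edge $e \notin E(T_v)$, and hence lies in $\Pi$. Granted the lemma, let $H^\ast = \{c_1,\dots,c_g\}$ be any shortest homology basis; I may assume each $c_i$ is a shortest representative of its class, because otherwise replacing $c_i$ by a shortest representative would yield a homology basis of strictly smaller weight. The lemma then forces $c_i \in \Pi$ for every $i$, so $H^\ast \subseteq \Pi$, and the greedy output, being the minimum-weight basis inside $\Pi$, has weight at most $w(H^\ast)$; equality follows since $H^\ast$ is globally optimum.

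The technical heart is the structural lemma itself. To sketch its proof, fix a shortest cycle $c$ in class $h$ and choose a vertex $v \in c$ together with an edge $e^\ast = uw \in c$ satisfying a \emph{midpoint} condition, so that the two arcs of $c$ joining $v$ to $u$ and $v$ to $w$ (and avoiding $e^\ast$) have balanced weights. Proposition~\ref{lem:cyclebasis} gives $c = \sum_{e \in c} \cycle(T_v, e)$, and the triangle inequality applied to the shortest-path tree $T_v$ yields $w(\cycle(T_v, e^\ast)) \le w(c)$. One upgrades this to $c = \cycle(T_v, e^\ast)$ by arguing that if either arc of $c$ from $v$ to an endpoint of $e^\ast$ were strictly longer than the corresponding shortest-path distance in $T_v$, then shortcutting it would produce a cycle $c'$ of weight $w(c') < w(c)$ and homologous to $c$, contradicting minimality. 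The main obstacle is precisely showing that the shortcut cycle $c'$ stays in class $h$: in a general complex, two shortest paths with common endpoints need not be homologous. In surfaces this follows from the disk-bounding property of shortest-path pairs; in general simplicial complexes~\cite{dsw-alshb-10} resolves it via a more delicate choice of $v$ and an analysis of $T_v$, which is the technically nontrivial ingredient.
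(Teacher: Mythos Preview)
The paper does not prove this proposition; it simply attributes it to \cite{ew-gohhg-05} and \cite{dsw-alshb-10}. Your matroid-greedy framing is the right one and is exactly how those references proceed, but the structural lemma you invoke is misstated and in fact false. It is \emph{not} true that the shortest representative of every non-trivial class lies in $\Pi$. Take a wedge of two simplicial circles $a$ and $b$ with no $2$-simplices: the class $[a]+[b]$ has the unique (hence shortest) representative $a+b$, whose wedge vertex has degree four, so $a+b$ is not $\cycle(T_v,e)$ for any $v,e$ and is not in $\Pi$. Your reduction ``each $c_i$ in an optimal basis is shortest in $[c_i]$, so by the lemma $c_i\in\Pi$'' therefore rests on a false premise, and the ``main obstacle'' you flag in the last paragraph is genuinely insurmountable in that formulation: \cite{dsw-alshb-10} does \emph{not} show that the shortcut stays in the same class.

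What \cite{ew-gohhg-05,dsw-alshb-10} actually prove is the relative statement: for every proper subspace $W\subsetneq\homo{1}(\KK)$, the shortest cycle $c$ with $[c]\notin W$ has the form $\cycle(T_v,e)$. Your own shortcut sketch establishes precisely this version, and the obstacle you identify dissolves: splitting $c=c_1+c_2$ with both pieces strictly shorter than $c$ (via the midpoint choice of $e^*$), one of $[c_1],[c_2]$ must lie outside $W$ since $W$ is a subspace and $[c_1]+[c_2]=[c]\notin W$; that piece then contradicts the minimality of $c$. No piece needs to remain in the class $h$. The proposition follows by running greedy on $\Pi$ and on all of $\Z_1$ in parallel: at step $i$, with $W_i$ the span of the classes already selected, the corrected lemma says the globally shortest cycle with class outside $W_i$ already belongs to $\Pi$, so the two greedy runs make identical choices.
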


\begin{theorem}
	Let $\KK$ be a simplicial complex of size $n$.
	We can find a shortest homology basis in 
time $O(n^{\omega}+ n^2g^{\omega-1})$ where $g = \rank(\homo{1}({\KK}))$.
\label{sbasis-thm}
\end{theorem}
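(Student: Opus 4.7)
The plan is to combine Proposition~\ref{lem:characterization} with the edge annotation of Theorem~\ref{thm:edgeannotations}. By Proposition~\ref{lem:characterization}, a shortest homology basis is obtained as the earliest basis of $\Pi$ when its cycles are sorted by non-decreasing weight. Each cycle $\gamma \in \Pi$ is represented succinctly by its length-$g$ annotation $\ann(\gamma) \in (\ZZ_2)^g$, the coordinate vector of $[\gamma]$ in a fixed homology basis of $\homo{1}(\KK)$. Since $|\Pi| = O(n_0 n_1) = O(n^2)$, the task reduces to computing the earliest basis of a $g \times O(n^2)$ binary matrix whose columns are these annotations in weight order.

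First I would invoke Theorem~\ref{thm:edgeannotations} to annotate every edge of $\KK$ in $O(n^\omega)$ time. Next, for each source $s \in V$, I would build the shortest path tree $T_s$ by Dijkstra in $O(n \log n)$ time and then compute, in a single traversal of $T_s$, the annotation $\ann(T_s[s,v])$ of the tree path from $s$ to every vertex $v$ using the recurrence $\ann(T_s[s,v]) = \ann(T_s[s,p(v)]) + \ann(e_{vp(v)})$, where $p(v)$ is the parent of $v$; this costs $O(ng)$ per source. For each non-tree edge $e = uv$ I would then read off $\ann(\gamma(T_s,e)) = \ann(T_s[s,u]) + \ann(e) + \ann(T_s[s,v])$ and the cycle weight $d(s,u) + w(e) + d(s,v)$ in $O(g)$ time. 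Over all sources this produces the annotations and weights of every cycle of $\Pi$ in $O(n^2 g + n^2 \log n)$ time, and sorting $\Pi$ by weight costs another $O(n^2 \log n)$.

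The final and most delicate step is computing the earliest basis of the resulting $g \times m$ annotation matrix $A$ with columns in sorted weight order, $m = O(n^2)$. Since Proposition~\ref{lem:linear}(b) is stated for $n \times n$ matrices, I would batch the work: maintain a current earliest basis $B$ of at most $g$ columns, initially empty; partition the columns of $A$ into $\lceil m/g \rceil$ contiguous blocks of at most $g$ columns; for block $A_i$, form the $g \times (|B|+g)$ matrix $[B \mid A_i]$ and recompute its earliest basis via Proposition~\ref{lem:linear}(b) in $O(g^\omega)$ time, using this as the new $B$. Bookkeeping which original column of $A$ each surviving basis column came from lets us recover the corresponding cycles of $\Pi$ at the end. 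Correctness rests on the invariant that after processing blocks $A_1, \dots, A_i$, $B$ is the earliest basis of all columns seen so far: since the columns of $B$ are strictly earlier in the global order than those of $A_i$ and are already linearly independent, any earliest basis of $[B \mid A_i]$ must begin with all columns of $B$ and then greedily extend with the next independent columns from $A_i$.

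Summing the costs gives $O(n^\omega) + O(n^2 g) + O(n^2 \log n) + O((m/g)\cdot g^\omega) = O(n^\omega + n^2 g^{\omega-1})$, as required, where the $n^2 g$ and $n^2 \log n$ terms are absorbed since $\omega > 2$. I expect the main obstacle to be the batched earliest-basis step: a naive column-by-column Gaussian elimination on $A$ would cost $O(n^2 g^2)$, one factor of $g$ over budget, so the reduction to repeated applications of Proposition~\ref{lem:linear}(b) together with the prefix-optimality invariant above is what buys the final bound.
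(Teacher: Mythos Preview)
Your proposal is correct and follows essentially the same approach as the paper: annotate edges in $O(n^\omega)$ time, compute annotations and weights of all cycles in $\Pi$ by traversing each shortest-path tree, sort, and then extract the earliest basis of the $g\times O(n^2)$ annotation matrix by processing the columns in blocks of size $g$ and repeatedly applying Proposition~\ref{lem:linear}(b) to the at most $2g$-column matrix $[B\mid A_i]$. The paper's argument and running-time accounting are virtually identical to yours, including the observation that the $O(n^2g)$ and $O(n^2\log n)$ terms are absorbed because $\omega>2$.
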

\begin{proof}
By Theorem \ref{thm:edgeannotations} we compute annotations for 
all edges in $O(n^{\omega})$ time. Let $\ann(e)$ 
denote such annotation for any edge $e$, and $\ann(z)=\sum_{e\in z}\ann(e)$ 
for any $1$-cycle $z$.

Next we compute annotations for all cycles $z\in \Pi$. Instead of computing
them one by one, we annotate all cycles in $Z_s$ at once for each $s$. 
Given a fixed $s$, we first compute $T_s$ in $O(n\log n)$ time. 
We assign a $g$-long label $\ell(x)$  
to each vertex $x\in V$.
The label $\ell(x)$ is the label $\ell(x')$
of its parent $x'$ plus the annotation of the edge $xx'$, $\ann(xx')$.
We compute labels for all vertices in $O(ng)$ time by a 
breadth-first traversal of $T_s$. Afterward, the annotation of 
any sentinel cycle $\cycle(T_s,xy)\in Z_s$ is computed 
in $O(g)$ time as $\ell(x)+\ell(y)+ \ann(xy)$.
Thus, we can compute the annotations for all cycles in 
$Z_s$ in $O(ng)$ time given $T_s$ and edge annotations.
To annotate all cycles of $\Pi$, 
we repeat the procedure for all source vertices $s$. 
Computing annotations for all cycles thus takes 
$O(n^2g + n^2 \log n)$ time.

\remove{
	We first fix a spanning tree $T$ in $G$ and compute 
a homology basis $H$ together with the labels $\cord_{H}([\cycle(T,e)])$ 
using Proposition~\ref{edge-annotate}.  This takes $O(n^{\omega})$ time.
	
	Next, we compute the homology type $h(z)=\cord_{H}([z])$ 
for all cycles $z$ of $\Pi$. Consider a fixed vertex $s$ of $V$; 
we are to compute the homology types for the cycles 
$\{ \cycle(T_s,e)\mid e\in E\setminus E(T_s)\}$. To each vertex $x$ of $V$ 
we assign a $g$-long label $\ell(x)$
which is the sum of the labels $\cord_{H}([\cycle(T,e)])$ 
along the edges in the path from $s$ to $x$ in $T_s$. 
Using a breadth-first traversal of $T_s$ and using that the 
label $\ell(x)$ of a vertex $x$ is the level $\ell(x')$
of its parent $x'$ plus $\cord_{H}([\cycle(T,xx')])$, 
we can compute all such labels in $O(ng)$ time.
The homology type of any cycle $\cycle(T_s,xy)$ is computed 
in $O(g)$ time as $\ell(x)+\ell(y)+ \cord_{H}([\cycle(T,xy)])$.
Thus, we can compute the labels for all cycles in $\Pi$ defined by $T_s$ 
in $O(ng)$ time.
To compute the labels for all cycles of $\Pi$, 
we repeat the procedure for all source vertices $s$.
This takes $O(n^2g)= O(n^2g^{\omega-1})$ time.
}

Since annotations of cycles give us the homology classes they belong to,
we can use them to find a shortest homology basis.
We sort cycles in $\Pi$ in non-decreasing order of their weights 
in $O(n^2\log n)$ time.
Let $z_1,z_2,z_3,\dots$ be the resulting ordering. 
We construct a matrix $A$ whose $i$th column is the 
vector $\ann(z_i)$, 
and compute its earliest basis. 
By Proposition~\ref{lem:characterization}, the cycles defining 
the earliest basis of $A$ form 
a shortest homology basis.
Since there are up to $n^2$ elements in $\Pi$, 
the matrix $A$ has size $g\times n^2$,
and thus it is inefficient to compute its earliest basis
using Proposition~\ref{lem:linear} directly. 
Instead, we use the following iterative method
to compute the 
set $J$ of indices of columns that define the earliest basis.

We partition $A$ from left to right into submatrices $A=[A_1|A_2|\cdots]$
where each submatrix $A_i$ 
contains $g$ columns with the possible exception of the last submatrix
which contains at most $g$ columns.
Initially, we set $J$ to be the empty set.   
We then iterate over the submatrices $A_i$ by increasing index. 
At each iteration we compute the earliest basis for 
the matrix $[A_{J} | A_i]$ where $A_{J}$ is the
submatrix whose column indices are in $J$. We then set $J$ to 
be the indices from the resulting earliest basis, 
increment $i$, and proceed to the next iteration.
At each iteration we need to compute the earliest basis in a 
matrix with $g$ rows and at most $|J|+g \le 2g$ columns.
There are at most $O(n^2/g)$ iterations each taking
$O(g^\omega)$ time. 

We obtain the claimed time bound by adding up the time to annotate edges, 
annotate cycles in $\Pi$,
and compute the earliest basis.
\end{proof}

\subsection{Shortest homologous cycle}
\label{sec:localization}

In this section, we show how to compute the shortest cycle in a 
given one-dimensional homology class.
In fact, within the same running time, we can compute a 
shortest cycle in each homology class.
The idea is to use covering graphs, and it
closely resembles the approach of 
Erickson and Nayyeri~\cite{en-mcsns-11}. 
We skip most of the details because of this similarity. 
Nevertheless, our main contribution is the use of the 
annotations from Section~\ref{sec:annotation-2}.

\parpic[r]{\includegraphics[scale=.65]{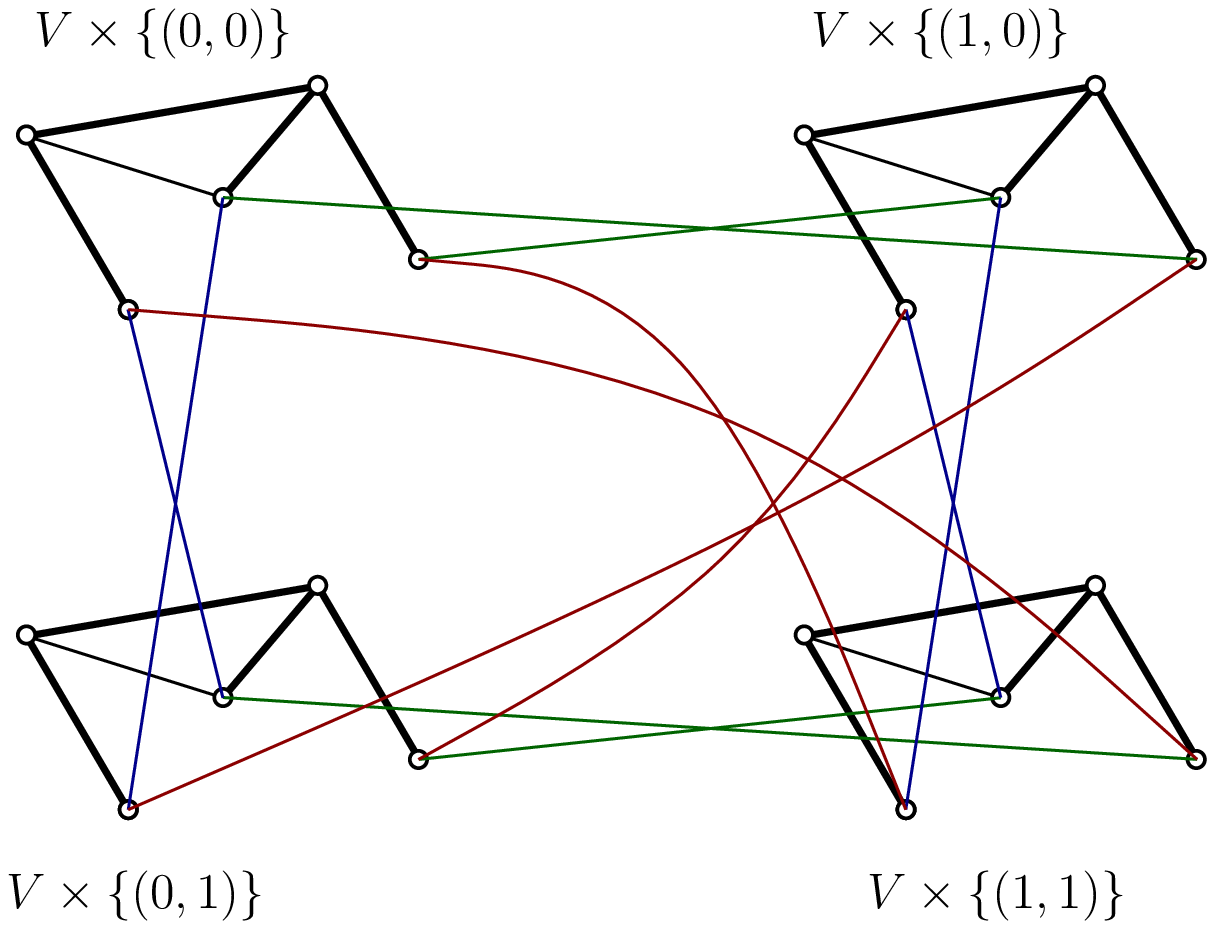}}
Let $G=(V,E)$ be the 1-skeleton of $\KK$.
We first compute an annotation $\ann\colon E\rightarrow (\ZZ_2)^g$, 
as given by Theorem~\ref{thm:edgeannotations}.
A walk in $G$ is a sequence of vertices $x_0 x_1 \ldots x_t$ connected
by edges in $E$. It is closed if $x_0=x_t$.
In this section we keep using the term cycle for elements of $\Z_1$.
Each closed walk in $G$ defines a cycle, 
where only edges appearing an odd number of times in the walk are kept. 
The annotation $\ann(w)$ of a walk $w=x_0x_1 \ldots x_t$ is defined 
as the sum of the annotations of its edges $x_{i-1} x_i$
for $i=1,\dots ,t$.
Notice that the annotation $\ann(w)$ of a closed walk $w$ is
the annotation of the cycle defined by $w$, as annotations in
edges appearing an even number of times in the walk cancel out.

We construct a covering graph $\G$ of the 1-skeleton of $\KK$, 
defined as follows:
\begin{itemize}
	\item $V(\G)= V\times (\ZZ_2)^g$.
	\item vertex $(v,h)\in V\times (\ZZ_2)^g$ is 
		adjacent to $(v',h')\in V\times (\ZZ_2)^g$
		if and only if $e=vv'$ is an edge of $E$ and 
		$h'=h+\ann(e)$.
		The weight of an edge $(v,h)(v',h')$ is the weight of $vv'$.
\end{itemize}

The graph $\G$ has $n_0\cdot 2^g$ vertices and $n_1\cdot 2^g$ edges. 
The covering graph for the example shown previously 
in section~\ref{sec:annotation-2} for annotation is
depicted above. 
The second coordinate of a vertex $(v,h)\in V(\G)$ is used to encode
the homology of cycles, as we will see.
Similar covering graphs are used, for example, in~\cite{ccl-fctpe-10,en-mcsns-11,p-deeoc-10}.

\begin{proposition}
	For all $h\in (\ZZ_2)^g$,
	we can compute
	a shortest walk $w_h$ in $G$ among all closed walks with annotation $h$
	in $O(2^g n^2(g+\log n))$ time.
\label{lem:loc-one-comp}
\end{proposition}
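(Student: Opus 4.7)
The plan is to exploit the standard covering-space correspondence between walks in $G$ and walks in $\G$. By construction, an edge of $\G$ incident to $(v,h)$ corresponds to an edge $e=vv'$ of $G$ and leads to $(v', h+\ann(e))$ with the same weight as $e$. By induction on length, any walk $x_0 x_1 \cdots x_t$ in $G$ has a unique lift in $\G$ starting at any chosen base vertex $(x_0, h_0)$, and this lift ends at $(x_t, \, h_0 + \ann(x_0 x_1 \cdots x_t))$ with the same total weight. Conversely, the projection to $G$ of any walk in $\G$ is a walk of the same weight. In particular, closed walks in $G$ starting at $v$ with annotation $h$ are in weight-preserving bijection with walks in $\G$ from $(v, 0)$ to $(v, h)$.

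Using this bijection, finding a shortest closed walk in $G$ with annotation $h$ reduces to computing, for each $v \in V$, a shortest path in $\G$ from $(v, 0)$ to $(v, h)$, and taking the minimum over $v$. So the algorithm runs Dijkstra from every source $(v, 0)$ in $\G$, stores the resulting distance array, and then for each $h \in (\ZZ_2)^g$ extracts the vertex $v$ minimizing $d_{\G}((v, 0), (v, h))$; the walk $w_h$ itself is recovered in output-linear time by projecting the corresponding $\G$-path to $G$ via back-pointers in the shortest-path tree.

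For the time bound, $\G$ has $n_0 \cdot 2^g$ vertices and $n_1 \cdot 2^g$ edges, both $O(n \cdot 2^g)$. A single Dijkstra run with a binary heap takes $O\bigl((n_0+n_1)\,2^g \cdot (g + \log n)\bigr) = O(2^g n(g+\log n))$ time; the additive $g$ term accounts both for $\log(n \cdot 2^g) = \Theta(g+\log n)$ in the heap operations and for the $O(g)$ cost of manipulating the $g$-bit second coordinate when updating vertex labels. Running Dijkstra from each of the $n_0 = O(n)$ sources $(v, 0)$ gives the claimed total of $O(2^g n^2 (g+\log n))$.

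The main obstacle is not algorithmic but conceptual: one must verify the lifting/projection correspondence carefully so that distances in $\G$ from $(v, 0)$ to $(v, h)$ really do coincide with shortest-closed-walk lengths in $G$ with annotation $h$. Once that correspondence is in hand, the proof is simply the covering-space argument together with the standard Dijkstra analysis applied to $\G$.
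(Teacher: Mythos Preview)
Your proposal is correct and follows essentially the same approach as the paper: establish the weight-preserving bijection between closed walks in $G$ through $v$ with annotation $h$ and walks in $\G$ from $(v,0)$ to $(v,h)$, then run a shortest-path computation in $\G$ from each source $(v,0)$ and minimize over $v$. The paper's proof is slightly terser on the time analysis (it writes $O(n(2^g n \log(2^g n))) = O(2^g n^2(g+\log n))$ directly), but the argument is the same.
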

\remove{
\begin{proof}
We first argue that there is a bijection between closed walks in $G$ through a 
vertex $v$ with annotation $h$ and walks in $\G$ 
from vertex $(v,0)$ to vertex $(v,h)$.
Indeed, assume first that there is a closed walk 
$w=x_0x_1\cdots x_t$ in $G$ with $x_t=x_0=v$.
By the definition of annotation we have
$$
\ann(w) ~=~ \sum_{i=1,\dots, t} \ann(x_{i-1} x_{i}).
$$
Define $y_0=(x_0,0)$ and define, for each vertex $x_i$ in $w$, the vertex
$y_i=(x_i,\sum_{j\le i}\ann(x_{j-1}x_j))$.
By construction, there is an edge in $\G$ between $y_i$ and $y_{i+1}$,
and therefore $y_0 y_1\cdots y_t$ is a path in $\G$ from $(v,0)$ to $(v,\ann(w))$.
Conversely, for any walk in $\G$ from vertex $(v,0)$ to vertex $(v,h)$,
the projection into the first coordinate provides a closed
walk through $v$ whose annotation is $h$.
	
For all $v\in V$, we compute the shortest paths from $(v,0)$ to all vertices in $\G$, 
and record the lengths.
This is $|V|$ computations of shortest path trees in $\G$ and 
thus takes $O(n (2^g n \log (2^g n))= O(2^g n^2 (g+\log n))$ time
(Erickson and Nayyeri~\cite{en-mcsns-11} can speed up this step because in their case
$\G$ is embedded in a surface.)
The closed walk $w_h$ is then obtained by 
considering the shortest path in $\G$ between $(v,0)$ and $(v,h)$, 
over all vertices $v\in V$, and then taking its
projection onto $G$.
\end{proof}
}

We say that a cycle is \emph{elementary} if 
it is connected and each vertex is adjacent to at most two edges of the cycle.
Each cycle is the union of edge-disjoint elementary cycles. 
First, we bound the number of elementary cycles in optimal solutions and then use
dynamic programming across annotations and the number of elementary cycles to obtain the following

\begin{proposition}\label{prop:fewcycles}
The shortest cycle in any given homology class consists of at most 
$g$ elementary cycles.
\end{proposition}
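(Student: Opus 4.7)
The plan is to argue by contradiction, using the fact that any $1$-cycle over $\ZZ_2$ admits an edge-disjoint decomposition into elementary cycles, and then exploiting the dimension bound of $\homo{1}(\KK)$.

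First, I would recall the standard graph-theoretic fact: since a $1$-cycle $z\in \Z_1$ corresponds to an edge subset in which every vertex has even degree, one can always decompose $z$ as $z = c_1 + c_2 + \cdots + c_m$, where the $c_i$ are edge-disjoint elementary cycles (simple cycles in the graph sense). This is routine and follows from iteratively extracting simple cycles from any even-degree subgraph.

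Next, among all shortest cycles in the given homology class $[z]$, I would choose one, call it $z^*$, that minimizes the number $m$ of elementary cycles in some such decomposition $z^* = c_1 + \cdots + c_m$. The goal is to show $m\le g$. Suppose for contradiction that $m > g$. Since $[c_1],\ldots,[c_m]$ are $m$ elements of the $g$-dimensional $\ZZ_2$-vector space $\homo{1}(\KK)$, they are linearly dependent, so there exists a nonempty subset $S \subseteq \{1,\ldots,m\}$ with $\sum_{i\in S}[c_i]=0$ in $\homo{1}(\KK)$.

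Now consider the cycle $z' := \sum_{i\notin S} c_i$. Since the $c_i$ are edge-disjoint, we have $w(z') = w(z^*) - \sum_{i\in S} w(c_i) \le w(z^*)$, using that weights are non-negative. Moreover, $[z'] = [z^*] - \sum_{i\in S}[c_i] = [z^*]$, so $z'$ lies in the same homology class as $z^*$. Hence $z'$ is also a shortest cycle in $[z]$, but its natural decomposition uses strictly fewer than $m$ elementary cycles (since $S$ is nonempty), contradicting the minimality in the choice of $z^*$. Therefore $m\le g$, proving the proposition.

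The only subtle point is the handling of zero-weight edges: if some $c_i$ in $S$ has weight zero, then $z'$ has the same weight as $z^*$ rather than being strictly shorter. This is why the argument must minimize the number of elementary components among all shortest cycles (rather than just picking any shortest cycle); with that refinement the contradiction still goes through cleanly. I expect this to be the only mildly delicate step; everything else is a direct application of the dimension bound on $\homo{1}(\KK)$.
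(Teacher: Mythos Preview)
Your argument is essentially the same as the paper's: decompose a shortest cycle into elementary cycles, invoke linear dependence in $\homo{1}(\KK)$ when there are more than $g$ of them, and drop a null-homologous subcollection to reach a contradiction. The paper simply asserts that the resulting cycle $z'$ is \emph{strictly} shorter (tacitly assuming every elementary cycle has positive weight), whereas you handle the zero-weight edge case by additionally minimizing the number of elementary components among all shortest representatives; this makes your version slightly more robust but otherwise identical in spirit.
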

\remove{
\begin{proof}
We can assume that the homology class is nonzero, 
as otherwise the result is trivial.
Assume for contradiction that the 
shortest cycle $z$ in a given homology class contains more than $g$ 
elementary cycles $z_1,\cdots,z_t$, where $t>g$. 
Each elementary cycle is a cycle by itself. 
Since the homology classes of 
these cycles are not linearly independent, there exists a set
of indices $I\subset \{1,\dots, t\}$ such that $0=\sum_{i\in I} [z_i]$.
Notice that $I\not=\{1,\dots, t\}$ as otherwise 
$[z]=\sum_{i\in \{1,\dots, t\}} [z_i] = 0$. The cycle 
$
	z' = \sum_{i\in \{1,\dots, t\}\setminus I} z_i 
$ 
is strictly shorter than $z$, and represents the same homology 
class as $z$. This contradicts the assumption
that $z$ is shortest in its class.
\end{proof}
}


\begin{theorem}\label{th:OHCP}
	In $O(n^\omega + 4^g g + 2^g n^2(g+\log n))= 
O(n^\omega) + 2^{O(g)} n^2\log n$ time 
	we can compute the shortest homology cycle for all homology
classes in $\homo{1}$.
\end{theorem}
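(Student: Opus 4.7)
The plan is to combine the three building blocks developed in this section: the edge annotations from Theorem~\ref{thm:edgeannotations}, the shortest closed walks per annotation from Proposition~\ref{lem:loc-one-comp}, and the bound on the number of elementary cycles in an optimum from Proposition~\ref{prop:fewcycles}. First I would invoke Theorem~\ref{thm:edgeannotations} to obtain an annotation $\ann\colon E \to (\ZZ_2)^g$ in $O(n^{\omega})$ time. Then I would build the covering graph $\G$ and apply Proposition~\ref{lem:loc-one-comp} to compute, for every $h \in (\ZZ_2)^g$, a shortest closed walk $w_h$ in $G$ with annotation $h$; write $L[h] = w(w_h)$, with $L[0]=0$ witnessed by the empty walk. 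This step takes $O(2^g n^2 (g+\log n))$ time.

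Let $M[h]$ denote the minimum weight of a $1$-cycle with annotation $h$; since annotations uniquely encode homology classes, these are exactly the quantities sought. I would prove the identity
\[
M[h] ~=~ \min\Bigl\{\,\sum_{i=1}^{k} L[h_i] ~\Big|~ k \le g,\; h_1 + \cdots + h_k = h\,\Bigr\}.
\]
The upper bound follows by taking the symmetric difference of $w_{h_1}, \dots, w_{h_k}$, which yields a cycle whose annotation is $\sum h_i = h$ (by linearity of $\ann$) and whose weight is at most $\sum L[h_i]$, since coinciding edges cancel. The lower bound follows from Proposition~\ref{prop:fewcycles}: an optimal cycle decomposes into at most $g$ edge-disjoint elementary cycles with annotations $h_1,\dots,h_k$ summing to $h$, and each elementary cycle is itself a closed walk with annotation $h_i$, hence has weight at least $L[h_i]$.

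With the identity in hand, I would compute $M[h]$ by dynamic programming over states $(j,h)$ with $0\le j \le g$ and $h \in (\ZZ_2)^g$: set $D[0][0]=0$ and $D[0][h]=\infty$ for $h \ne 0$, and update
\[
D[j][h] ~=~ \min_{h' \in (\ZZ_2)^g} \bigl( D[j-1][h + h'] + L[h'] \bigr).
\]
Then $M[h] = D[g][h]$, and the witnessing walks $w_{h_i}$ (combined by symmetric difference) produce an actual shortest cycle in each class, using backpointers to recover the decomposition. The DP has $O(g \cdot 2^g)$ states each updated in $O(2^g)$ time, so it runs in $O(4^g g)$ time. Summing the three costs yields $O(n^{\omega} + 2^g n^2(g+\log n) + 4^g g) = O(n^{\omega}) + 2^{O(g)} n^2 \log n$, as claimed.

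The main subtlety, and the only nontrivial step of the argument, is justifying the equality $M[h] = \min \sum L[h_i]$: one must use Proposition~\ref{prop:fewcycles} to cap the number of summands at $g$ and verify that assembling the walks $w_{h_i}$ by symmetric difference does not change the total annotation, while possibly decreasing the total weight. Once this equality is established, the remainder of the proof is a straightforward accounting of running times.
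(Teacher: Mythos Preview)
Your proposal is correct and follows essentially the same approach as the paper: compute edge annotations, compute the shortest closed walk $w_h$ for each $h\in(\ZZ_2)^g$ via the covering graph, and then run a dynamic program over pairs (annotation, number of elementary pieces) using Proposition~\ref{prop:fewcycles} to cap the number of summands at $g$. The only differences are cosmetic---your DP is indexed from $j=0$ with $D[0][0]=0$ whereas the paper starts at $k=1$ with $C(h,1)=w(w_h)$, and you spell out the upper/lower bound for the identity $M[h]=\min\sum L[h_i]$ a bit more explicitly than the paper's inductive phrasing---but the argument and running-time accounting are the same.
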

\remove{
\begin{proof}
	We compute annotations for the edges in $O(n^\omega)$ time
	using Theorem~\ref{thm:edgeannotations}.
	For each $h\in (\ZZ_2)^g$ we compute the closed walk $w_h$ given 
	in Proposition~\ref{lem:loc-one-comp}. This takes $O(2^g n^2(g+\log n))$ time.
	  
	For any $h\in (\ZZ_2)^g$ and any integer $k\in [1,g]$, 
	we define $C(h,k)$ as follows
	$$
		C(h,k) = \begin{cases}
			\mbox{length of $w_h$}  &	\mbox{if $k=1$;}\\
			\min \{C(h_1,k-1)+C(h_2,1) \mid h=h_1+h_2\}&	\mbox{if $k>1$.}
		\end{cases}
	$$
	It is straightforward to see by induction
	that $C(h,k)$ is an upper bound 
	on the length of the shortest cycle with annotation $h$. 
	Most interestingly, $C(h,g)$ is the length of the shortest cycle
	with annotation $h$. 
	Indeed, if the shortest cycle with annotation $h$
	consists of the elementary cycles $z_1,\dots, z_t$ with $t\le g$,
	then it follows by induction that, for any $I\subset \{1,\dots t\}$, 
	$C \left( \sum_{i\in I} \ann(z_i), |I| \right)$ is the length of $\left(\sum_{i\in I} z_i \right)$.
	
	A standard dynamic programming algorithm to compute $C(h,k)$ takes $O(2^g)$ time per element,
	for a total of $O(4^gg)$ time. We then return for each $h\in (\ZZ_2)^g$ the cycle
	defining $C(h,g)$. (A shortest cycle homologous to a given cycle $z$
	is obtained as the cycle defining $C(\ann(z),g)$.) 
\end{proof}
}

\section{Annotating \texorpdfstring{$\boldsymbol{p}$}{p}-simplices}
\label{sec:annotation-p}

In this section, we show how to compute annotations for $p$-simplices. 
Notice that the only thing we need to generalize is the first step:
find a set $\Sigma$ of $p$-simplices (\emph{sentinel simplices}) with cardinality $\dim{(\Z_p)}$
and a cycle basis $Z=\{ z_\sigma \}_{\sigma\in \Sigma}$ (\emph{sentinel cycles}) for the $p$-cycle group $\Z_p$
with the property that $z_\sigma$ contains $\sigma'\in \Sigma$ if and only if $\sigma=\sigma'$.
With this property, any $p$-cycle $z$ can then be written as
$z=\sum_{\sigma\in z\cap \Sigma} z_\sigma.$ Taking $z_\sigma=0$ for
all $\sigma\not\in \Sigma$, we have
$z=\sum_{\sigma\in z}z_\sigma$. 
With such a basis, we proceed with Step 2 and 3 just like in the case for edges to annotate $p$-simplices. 
Below, we explain how to compute such a cycle basis $Z$.

In the case for annotating edges, we first fix a spanning tree. 
The boundaries of its edges form a $0$-dimensional boundary 
basis. Any of the remaining edges when added to the tree
creates a unique $1$-cycle which is kept associated with this edge as a sentinel cycle.
For $p$-simplices, $p>1$, we do not have a spanning tree, but
Proposition \ref{lem:linear}(c) provides us an algebraic tool that 
serves the same purpose.

Specifically, consider the $n_{p-1} \times n_p$ boundary
matrix $\partial_p$ of rank $r$, where the $i$-th column in 
$\partial_p$ corresponds to the ($p-1$)-boundary of $p$-simplex $\sigma_i$. 
Using Proposition \ref{lem:linear}(c)
we can obtain an $n_p \times n_p$ matrix $P$,
an $n_{p-1} \times r$ matrix $B_{opt}$, and an $r\times(n_p-r)$ matrix $\matR$ so that
$$
	\partial_p P = B_{opt}[I_r\mid \matR]. 
$$
Notice that $P$ permutes the 
$p$-simplices so that the first $r$ columns of 
$\partial_p P$ form the earliest basis $B_{opt}$. By reordering the columns
of $\partial_p$, we may assume that $P$ is the identity, giving 
$\partial_p=B_{opt}[I_r\mid R]$.
In this scenario, the columns of $B_{opt}$ form 
a basis of the column-space of $\partial_p$, and
contains the first $r=\rank{(\partial_p)}$ columns of $\partial_p$. 
Note that the $i$-th column in $[I_r\mid \matR]$
gives the coordinate vector of the boundary cycle for $\sigma_i$ 
in the boundary basis $B_{opt}$. 

Take the first $r$ $p$-simplices $\{ \sigma_1, \ldots, \sigma_r\}$.
Their boundaries are linearly independent. Therefore, no subset of them 
can form a $p$-cycle. In analogy to Section \ref{sec:annotation-2}, 
we use $T$ for this collection of $p$-simplices and call them non-sentinel simplices.
The set $\Sigma=\KK_p\setminus T$ of $p$-simplices are the \emph{sentinel simplices}.

Now consider any sentinel $p$-simplex, say $\sigma_{r+i}\in \Sigma$.
Its boundary is the $(r+i)$-th column in $\partial_p$ and is equal to 
$B_{opt}\matR[i]$, where $\matR[i]$ is the $i$-th column of $\matR$. 
This means that $\partial_p \sigma_{r+i} = \sum_{j=1}^r \matR[j,i] (\partial_p \sigma_j)$ where $\matR[j,i]$ is the $j$-th entry in the $i$-th 
column of $\matR$. 
Hence taking the set of $p$-simplices $\sigma_j$, $j \in [1,r]$, 
whose corresponding entries $\matR[j,i]$ are $1$, 
plus $\sigma_{r+i}$ itself, we obtain a $p$-cycle $\cycle(T, \sigma_{r+i})$. 
We call this $p$-cycle a \emph{sentinel cycle}.
Similar to Section \ref{sec:annotation-2}, we set $\cycle(T, \sigma) = 0$ 
for each non-sentinel simplex $\sigma \in T$. 
Clearly, $\gamma(T,\sigma_{r+i})$ can only contain one simplex from $\Sigma$
which is $\sigma_{r+i}$.
We have the desired property:
a sentinel simplex $\sigma\in \Sigma$ belongs
to a sentinel cycle $\gamma(T,\sigma')$ if and only if $\sigma=\sigma'$.
Finally, observe that the columns of 
$\displaystyle{\left[\begin{array}{c}R\\I_{n_p-r}\end{array}\right]}$
give the set of sentinel cycles $Z$. The $(n_p-r) \times (n_p-r)$ 
identity matrix $I_{n_p-r}$ associates
each sentinel cycle $\cycle(T, \sigma)$ in $Z$ to its sentinel $p$-simplex $\sigma$. 
Similar to Proposition \ref{lem:cyclebasis}, we have:
\begin{proposition}
 $Z = \{ \cycle(T, \sigma_{r+1}), \ldots, \cycle(T, \sigma_{n_p}) \}$ 
is a cycle basis, and for any $p$-cycle $z$  we have $z=\sum_{\sigma\in z}\cycle(T, \sigma)$.
\end{proposition}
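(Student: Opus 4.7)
The plan is to mimic the two-part proof of Proposition \ref{lem:cyclebasis}, using the sentinel property that replaces the role of the spanning tree. The construction has been arranged so that each sentinel $p$-simplex $\sigma_{r+i} \in \Sigma$ belongs to exactly one cycle of $Z$, namely $\gamma(T, \sigma_{r+i})$, and non-sentinel simplices never serve as sentinels. This sentinel property does most of the work.

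First I would verify that each $\gamma(T,\sigma_{r+i})$ is genuinely a $p$-cycle, which is built into the construction: by the identity $\partial_p = B_{opt}[I_r \mid \matR]$, the boundary of the $(r+i)$-th column equals $B_{opt}\,\matR[\cdot,i]$, which is exactly the sum of $\partial_p \sigma_j$ over those $j \in [1,r]$ with $\matR[j,i]=1$. Adding $\sigma_{r+i}$ itself then cancels this boundary. Second, I would establish linear independence: any non-trivial sum $\sum_{i \in I} \gamma(T, \sigma_{r+i})$ contains the sentinel simplex $\sigma_{r+i_0}$ (for any $i_0 \in I$) exactly once, since no other cycle in $Z$ contains it; so the sum cannot vanish. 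Third, a dimension count finishes the basis claim: $\dim \Z_p = n_p - \rank(\partial_p) = n_p - r = |Z|$.

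For the second part of the statement, I would argue by subtraction. Given a $p$-cycle $z$, consider
$$
z' ~=~ z ~+~ \sum_{\sigma \in z \cap \Sigma}\gamma(T,\sigma).
$$
Because each $\sigma \in z \cap \Sigma$ is the unique sentinel of $\gamma(T,\sigma)$, adding the sentinel cycles cancels all sentinel simplices of $z$, so $z'$ is supported entirely on the non-sentinel simplices $T = \{\sigma_1,\dots,\sigma_r\}$. However, $z'$ is still a $p$-cycle (sum of cycles), so $\partial_p z' = 0$. Writing $z'$ as a linear combination of $\sigma_1,\dots,\sigma_r$, the equation $\partial_p z' = 0$ becomes a dependence among the first $r$ columns of $\partial_p$. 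Since those columns are precisely the earliest basis $B_{opt}$ and hence linearly independent, the coefficients must all be zero, i.e.\ $z'=0$. This yields $z = \sum_{\sigma \in z \cap \Sigma}\gamma(T,\sigma)$, and since $\gamma(T,\sigma)=0$ for $\sigma \in T$, this equals $\sum_{\sigma \in z} \gamma(T,\sigma)$, as desired.

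The one step that requires a bit of care is the last one: one must use the fact that $B_{opt}$ consists of the first $r$ columns of $\partial_p$ (after the implicit reordering by $P$) to conclude that no nontrivial chain supported on $T$ is a cycle. Everything else is a direct transcription of the argument for edges in Proposition~\ref{lem:cyclebasis}, with the algebraically defined sentinel structure from Proposition~\ref{lem:linear}(c) replacing the combinatorial role of the spanning tree.
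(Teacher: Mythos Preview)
Your proof is correct and matches the paper's argument for the first half (linear independence via the sentinel property plus the dimension count $|Z|=n_p-r=\dim\Z_p$). For the decomposition formula the paper takes a slightly shorter route: having already shown that $Z$ is a basis, it observes that any $p$-cycle $z$ has unique coordinates in $Z$, and since the sentinel $\sigma\in\Sigma$ appears in $\gamma(T,\sigma')$ iff $\sigma=\sigma'$, the coordinate of $\gamma(T,\sigma)$ must equal $1$ exactly when $\sigma\in z$. Your subtraction argument instead bypasses the basis claim and appeals directly to the linear independence of the columns of $B_{opt}$ to force $z'=0$; this is equally valid and arguably more self-contained, at the cost of a couple of extra lines.
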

\begin{proof}
 $Z$ is linearly independent since each cycle contains a unique sentinel simplex. 
Since $Z$ has $n_p-r = n_p - \rank(\partial_p)=\dim(\Z_p)$ elements, $Z$ forms a basis for $\Z_p$. 
An arbitrary $p$-cycle $z$ has a unique coordinate in the cycle basis $Z$. 
Since each sentinel simplex $\sigma \in \Sigma$ belongs to
one and only one cycle in $Z$, the corresponding coordinate is one for a cycle $\cycle(T,\sigma)$ if and only if $\sigma\in z$. 
\end{proof}

Combining this proposition with Step 2 and 3 
from Section \ref{sec:annotation-2}, we obtain the following theorem. 
\begin{theorem}
	We can annotate the $p$-simplices 
	in a simplicial complex with $n$ simplices
	in $O(n^{\omega})$ time.
\label{thm:dimension-p-annotation}
\end{theorem}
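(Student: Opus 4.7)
The plan is to mirror the three-step strategy of Section~\ref{sec:annotation-2}, with Step~1 replaced by an algebraic construction based on Proposition~\ref{lem:linear}(c) applied to $\partial_p$. The cycle basis property needed for annotation---that each sentinel $p$-simplex lies in exactly one sentinel cycle---is exactly what the preceding proposition guarantees, so the remaining two steps carry over from the edge case with essentially no change.

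First, I would invoke Proposition~\ref{lem:linear}(c) on $\partial_p$ to obtain, in $O(n^\omega)$ time, a permutation $P$ and matrices with $\partial_p P = B_{opt}[I_r \mid \matR]$. After relabelling the $p$-simplices according to $P$, the columns of $\left[\begin{array}{c}\matR\\ I_{n_p-r}\end{array}\right]$ furnish the sentinel cycle basis $Z$, the last $n_p-r$ simplices are declared sentinels, and the preceding proposition already shows that $Z$ is a basis of $\Z_p$ satisfying $z=\sum_{\sigma\in z}\cycle(T,\sigma)$ for every $p$-cycle $z$. This is the exact analogue of Proposition~\ref{lem:cyclebasis} needed for what follows.

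Second, I would compute a homology cycle basis exactly as in Step~2 of Section~\ref{sec:annotation-2}: form the matrix $[\partial_{p+1} \mid Z]$ and compute its earliest basis $\widetilde{Z}=[B\mid H]$ by Proposition~\ref{lem:linear}(b) in $O(n^\omega)$ time. By the definition of earliest basis, $B$ is drawn entirely from $\partial_{p+1}$ and forms a basis of $\B_p$, while $H$ is drawn from $Z$; since $\im\partial_{p+1}\cup Z$ generates $\Z_p$, the classes of $H$ form a basis of $\homo{p}(\KK)$.

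Third, I would solve the linear system $\widetilde{Z}X=Z$ via Proposition~\ref{lem:linear}(a) in $O(n^\omega)$ time; the bottom $g$ rows of $X$ give, for each sentinel cycle $\cycle(T,\sigma)$, its coordinates in $H$. I assign those coordinates as $\ann(\sigma)$ and set $\ann$ to zero on non-sentinel simplices. Correctness follows by linearity: for any $p$-cycle $z$, the cycle basis identity gives $[z]=\sum_{\sigma\in z}[\cycle(T,\sigma)]$ in $\homo{p}(\KK)$, and summing across $\sigma\in z$ shows that $\sum_{\sigma\in z}\ann(\sigma)$ is precisely the coordinate vector of $[z]$ in $H$. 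The three $O(n^\omega)$ steps add to give the claimed bound. The only departure from the edge case---and therefore the only potential obstacle---is Step~1, where the combinatorial spanning tree is replaced by the LSP-based earliest-basis decomposition of $\partial_p$; once the sentinel pairing $\sigma \leftrightarrow \cycle(T,\sigma)$ is in hand, Steps~2 and~3 are formally identical.
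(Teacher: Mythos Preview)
Your proposal is correct and matches the paper's approach essentially verbatim: the paper likewise replaces Step~1 by the earliest-basis decomposition $\partial_p P = B_{opt}[I_r\mid \matR]$ from Proposition~\ref{lem:linear}(c), reads off the sentinel cycles as the columns of $\left[\begin{smallmatrix}\matR\\ I_{n_p-r}\end{smallmatrix}\right]$, invokes the preceding proposition for the cycle-basis identity, and then states that Steps~2 and~3 carry over unchanged from Section~\ref{sec:annotation-2}. There is no substantive difference.
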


\section{Null Homology and Independence}
\label{sec:null-homology}

Our annotation algorithm can be used 
to address some of the computational problems involving $p$-cycles.

\paragraph{Null homology.} A $p$-cycle $z$ in a
simplicial complex $\KK$ is called {\em null homologous}
if $[z]=0$. A cycle is null homologous if and only if it
has zero coordinates in {\em some} and hence {\em any} basis of $\homo{p}(\KK)$.
Consider the problem:
\begin{quote}
Q1: Given a $p$-cycle $z$ in a simplicial complex $\KK$,
decide if $z$ is null homologous.
\end{quote}
A p-cycle $z$ is null homologous if and only if it is linearly
dependent to columns of $\partial_{p+1}$. This could be
determined by checking whether $z$ belongs to
the earliest basis of the matrix $[\partial_{p+1}\mid z]$.
The complexity of such computation is $O(n^\omega)$ (Proposition \ref{lem:linear}(b)).
\remove{
This problem can be solved by matrix reduction 
as follows. Think of inserting an open $(p+1)$-cell whose 
boundary is $z$ into $\KK$ treating it as a cell complex.
If $z$ is already null homologous in $\KK$, the inserted
$(p+1)$-cell creates a $(p+1)$-cycle in the cellular homology of $\KK$.
This can be detected by inserting a column for the
dummy $(p+1)$-cell at the right end of the
boundary matrix $\partial_{p+1}$ of $\KK$. Let $Z$ denote the incidence vector
of $z$, that is, $Z[i]=1$ if and only if $\sigma_i$ is in $z$.
We reduce the matrix $[\partial_p\mid Z]$. 
The cycle $z$ is null homologous if and only if the column for
$Z$ is reduced to a zero column.
This algorithm runs in
$O(n^{\omega})$ time
where $\KK$ contains $n$ simplices. 
}

However, with annotations whose
computations take $O(n^{\omega})$ time, we can improve the query time 
for Q1 to $O(tg)$ where
$g=\dim \homo{p}(\KK)$ and
$t$ is the number of $p$-simplices in $z$. 
For this we simply
add the annotations of the $p$-simplices in $z$
and check if the result is zero, which takes $O(tg)$ time.
Now consider the following decision problem:
\begin{quote}
Q2: Given two $p$-cycles $z_1$ and $z_2$
in a simplicial complex $\KK$,
decide if $z_1$ and $z_2$ are homologous.
\end{quote}
The problem Q2 reduces to Q1 because
$z_1$ and $z_2$ are homologous if and only
if $z_1+z_2$ is null homologous. Therefore,
Q2 can be answered in $O((t_1+t_2)g)$ time
after $O(n^{\omega})$ time preprocessing where
$t_1$ and $t_2$ are the number of $p$-simplices
in $z_1$ and $z_2$ respectively.

\paragraph{Independence.} An analogous problem to testing
null homology is the problem of testing independence.
\begin{quote}
Q3: Find a maximally independent subset
of a given set of $p$-dimensional homology classes
$[z_1],\ldots,[z_k]$ in a simplicial complex $\KK$.
\end{quote}
Without our annotation technique, for each such query, we could apply
Proposition~\ref{lem:linear}(b) to the $n\times (n+k)$ matrix 
$[\partial_{p+1} \mid z_1\,\, z_2\cdots z_k]$.
The set of cycles $z_i$s belonging to the earliest basis would represent
a linearly independent set of classes. 
The overall complexity is $O((n+k)^\omega)$.
Using the iterative technique delineated in Theorem~\ref{sbasis-thm}, 
we can improve the query time to $O(n^\omega(n+k)/n)=O(n^\omega+kn^{\omega-1})$.
However, with our annotation technique, we could improve the
query time. 

Compute the annotations for all edges in $O(n^\omega)$ time. 
By Theorem \ref{thm:dimension-p-annotation}, 
we can then compute an annotation $\ann(z_i)$ in $O(tg)$ time 
for all cycles $z_i$s, where $t$ 
is the number of simplices altogether in all cycles and
each $\ann(z_i)$ is a length-$g$ vector. 
Now construct a matrix $A$ whose $i$th column is the vector
$\ann(z_i)$. Notice that the earliest basis of $A$ form a maximally independent
subset of column vectors in $A$. 
Thus the set of cycles $z_i$s corresponding to columns in this earliest basis 
form a maximally independent subset of the input set of $p$-cycles. 
Since $A$ is of size $g\times k$, we have an
$O(g^\omega+k^\omega)$ query time, after an $O(n^{\omega})$ 
preprocessing time.
We can improve the query time 
to $O((g+k)g^{\omega-1})$ by
using the iterative technique in Theorem~\ref{sbasis-thm}.
Therefore, total time
for computing a maximally independent set takes $O(tg+(g+k)g^{\omega-1})$ time
after annotations.

\section{Conclusions}
\label{sec:conclusions}

In this paper, we present an algorithm to annotate 
$p$-simplices in a complex so that computations 
about the homology groups can be done faster. 
We have shown its applications
to some problems that concern with the optimality of $1$-cycles and 
topological characterizations of the $p$-cycles.
The algorithm uses operations such as matrix inversion 
and matrix multiplication, and thus
can take advantage of the best known
algorithms for these operations, which take $o(n^3)$ time.

For defining the weights of a cycle we used $1$-norm to combine the
weights of the individual edges. For defining the weight of a basis
we also used $1$-norm to combine the weights of the basis cycles.
In these problems we can use any other norm to define these weights.

One may wonder why we cannot extend our annotation approach to 
compute the optimal homology basis or the optimal homologous cycle for higher dimensional cycles. The main 
bottleneck for finding an optimal basis is that the Proposition~\ref{lem:characterization} does not
generalize to higher dimensional cycles. Given that the problems 
are NP-hard in high dimensions even for $g=1$~\cite{cf-hrhl-11}, 
such extensions cannot exist unless P=NP.

Instead of computing the shortest homology basis,
one may want to compute a set of edges with minimal total weight which 
supports a homology cycle basis. In such case, the algorithm in Theorem
\ref{sbasis-thm} cannot be used. The annotations and the covering graph
might help.

Finally, we point out that one can use any finite field instead of $\Z_2$ for
annotations. Since annotations mainly utilize matrix multiplications which 
remain valid under any field, the annotation
algorithm in section~\ref{sec:annotation-p} remains applicable 
without any change. However, the
optimal cycles in sections~\ref{sec:optimality} 
and \ref{sec:localization} require computations of
cycles associated with shortest paths which we do not know how to generalize
for general fields. Specifically, it is not clear how to obtain 
results for applying 
Propositions~\ref{lem:characterization} and \ref{lem:loc-one-comp}. 
This could be a topic of further
research.
\bibliographystyle{abbrv}
\bibliography{surf}

\newpage
\appendix
\label{sec:apx}
\section{Omitted proofs}

\remove{
{\bf Proof of Proposition~\ref{lem:cyclebasis}.}
	Since the edge $e_1\in E\setminus E(T)$ does not appear in 
$\cycle(T,e_2),\dots, \cycle(T,e_k)$, the cycle $\cycle(T,e_1)$ is 
linearly independent of $\cycle(T,e_2),\dots, \cycle(T,e_k)$. 
The same argument applies to any cycle $\cycle(T,e_i)\in Z$,
	and thus the cycles of $Z$ are linearly independent. 
Since $k = \dim(\Z_1)$, $Z$ is indeed a basis for $\Z_1$. 

	To show the second  half of the claim, fix an arbitrary cycle $z$. 
Fix a vertex $s$ of the tree $T$ and for any vertex $u$ of $T$, 
let $T[s,u]$ denote the unique simple path in $T$ connecting $u$ to $s$. 
We then have:
$$
z ~=~ \sum_{e = uv \in z} e ~=~ \sum_{e = uv \in z} (T[s,u] + e + T[s,v]) ~=~ 
		\sum_{e\in z} \cycle(T,e).
$$
The second equality holds as every vertex in the cycle $z$ is the endpoint of an even number of edges of $z$. 
\qed

\medskip
}
{\bf Proof of Proposition~\ref{lem:loc-one-comp}.}
We first argue that there is a bijection between closed walks in $G$ through a 
vertex $v$ with annotation $h$ and walks in $\G$ 
from vertex $(v,0)$ to vertex $(v,h)$.
Indeed, assume first that there is a closed walk 
$w=x_0x_1\cdots x_t$ in $G$ with $x_t=x_0=v$.
By the definition of annotation we have
$$
\ann(w) ~=~ \sum_{i=1,\dots, t} \ann(x_{i-1} x_{i}).
$$
Define $y_0=(x_0,0)$ and define, for each vertex $x_i$ in $w$, the vertex
$y_i=(x_i,\sum_{j\le i}\ann(x_{j-1}x_j))$.
By construction, there is an edge in $\G$ between $y_i$ and $y_{i+1}$,
and therefore $y_0 y_1\cdots y_t$ is a path in $\G$ from $(v,0)$ to $(v,\ann(w))$.
Conversely, for any walk in $\G$ from vertex $(v,0)$ to vertex $(v,h)$,
the projection into the first coordinate provides a closed
walk through $v$ whose annotation is $h$.
	
For all $v\in V$, we compute the shortest paths from $(v,0)$ to all vertices in $\G$, 
and record the lengths.
This is $|V|$ computations of shortest path trees in $\G$ and 
thus takes $O(n (2^g n \log (2^g n))= O(2^g n^2 (g+\log n))$ time
(Erickson and Nayyeri~\cite{en-mcsns-11} can speed up this step because in their case
$\G$ is embedded in a surface.)
The closed walk $w_h$ is then obtained by 
considering the shortest path in $\G$ between $(v,0)$ and $(v,h)$, 
over all vertices $v\in V$, and then taking its
projection onto $G$.
\qed

\medskip
{\bf Proof of Proposition~\ref{prop:fewcycles}.}
We can assume that the homology class is nonzero, 
as otherwise the result is trivial.
Assume for contradiction that the 
shortest cycle $z$ in a given homology class contains more than $g$ 
elementary cycles $z_1,\cdots,z_t$, where $t>g$. 
Each elementary cycle is a cycle by itself. 
Since the homology classes of 
these cycles are not linearly independent, there exists a set
of indices $I\subset \{1,\dots, t\}$ such that $0=\sum_{i\in I} [z_i]$.
Notice that $I\not=\{1,\dots, t\}$ as otherwise 
$[z]=\sum_{i\in \{1,\dots, t\}} [z_i] = 0$. The cycle 
$
	z' = \sum_{i\in \{1,\dots, t\}\setminus I} z_i 
$ 
is strictly shorter than $z$, and represents the same homology 
class as $z$. This contradicts the assumption
that $z$ is shortest in its class.
\qed

\medskip
{\bf Proof of Theorem~\ref{th:OHCP}.}
	We compute annotations for the edges in $O(n^\omega)$ time
	using Theorem~\ref{thm:edgeannotations}.
	For each $h\in (\ZZ_2)^g$ we compute the closed walk $w_h$ given 
	in Proposition~\ref{lem:loc-one-comp}. This takes $O(2^g n^2(g+\log n))$ time.
	  
	For any $h\in (\ZZ_2)^g$ and any integer $k\in [1,g]$, 
	we define $C(h,k)$ as follows
	$$
		C(h,k) = \begin{cases}
			\mbox{length of $w_h$}  &	\mbox{if $k=1$;}\\
			\min \{C(h_1,k-1)+C(h_2,1) \mid h=h_1+h_2\}&	\mbox{if $k>1$.}
		\end{cases}
	$$
	It is straightforward to see by induction
	that $C(h,k)$ is an upper bound 
	on the length of the shortest cycle with annotation $h$. 
	Most interestingly, $C(h,g)$ is the length of the shortest cycle
	with annotation $h$. 
	Indeed, if the shortest cycle with annotation $h$
	consists of the elementary cycles $z_1,\dots, z_t$ with $t\le g$,
	then it follows by induction that, for any $I\subset \{1,\dots t\}$, 
	$C \left( \sum_{i\in I} \ann(z_i), |I| \right)$ is the length of $\left(\sum_{i\in I} z_i \right)$.
	
	A standard dynamic programming algorithm to compute $C(h,k)$ takes $O(2^g)$ time per element,
	for a total of $O(4^gg)$ time. We then return for each $h\in (\ZZ_2)^g$ the cycle
	defining $C(h,g)$. (A shortest cycle homologous to a given cycle $z$
	is obtained as the cycle defining $C(\ann(z),g)$.) 
\qed

\end{document}